\documentclass[11pt]{article}
\usepackage{geometry}
\geometry{tmargin=1in,bmargin=1in,lmargin=1in,rmargin=1in}

\usepackage[T1]{fontenc}
\usepackage[utf8]{luainputenc}
\usepackage{amsmath}
\usepackage{amsthm}
\usepackage[english]{babel}

\usepackage{times}
\usepackage{amssymb}
\usepackage{cite}

\newtheorem{theorem}{Theorem}[section]

\newtheorem{lemma}[theorem]{Lemma}
\newtheorem*{lemma*}{Lemma}

\newtheorem{claim}[theorem]{Claim}
\newtheorem*{claim*}{Claim}
\newtheorem{definition}[theorem]{Definition}

\newtheorem{remark*}{Remark}

\DeclareMathOperator{\argmax}{arg\,max}
\newcommand{\ALG}{USM}
\newcommand{\LMS}{LMS_\epsilon}

\usepackage{pdfpages}

% !TeX spellcheck = en_US
\begin{document}
%\includepdf[pages={1}]{ThesisTitle.pdf}
%\cleardoublepage
%\setcounter{page}{1}
\title{A Deterministic Algorithm for Maximizing Submodular Functions}
\author{Shahar Dobzinski \and Ami Mor}
\maketitle
\begin{abstract}
The problem of maximizing a non-negative submodular function was introduced by Feige, Mirrokni, and Vondrak [FOCS'07] who provided a deterministic local-search based algorithm that guarantees an approximation ratio of $\frac 1 3$, as well as a randomized $\frac 2 5$-approximation algorithm. An extensive line of research followed and various algorithms with improving approximation ratios were developed, all of them are randomized. Finally, Buchbinder et al. [FOCS'12] presented a randomized $\frac 1 2$-approximation algorithm, which is the best possible. 

This paper gives the first deterministic algorithm for maximizing a non-negative submodular function that achieves an approximation ratio better than $\frac 1 3$. The approximation ratio of our algorithm is $\frac 2 5$. Our algorithm is based on recursive composition of solutions obtained by the local search algorithm of Feige et al. We show that the $\frac 2 5$ approximation ratio can be guaranteed when the recursion depth is $2$, and leave open the question of whether the approximation ratio improves as the recursion depth increases.
\end{abstract}

\section{Introduction}

Let $f:2^M\rightarrow \mathbb R$ be a set function on a ground set $M$ of elements ($|M|=m$). We say that $f$ is \emph{submodular} if it exhibits decreasing marginal utilities. In other words, For every two sets $S$ and $T$ such that $S\subseteq  T\subseteq M$ and element $j\in M$: 
\begin{align*}
f(S\cup \{j\})-f(S)\geq f(T\cup \{j\})-f(T)
\end{align*}
An equivalent definition is that for every two subsets $S$ and $T$ it holds that:
$$
f(S)+f(T)\geq f(S\cup T)+f(S\cap T)
$$
It is common to assume that $f$ is non-negative: that is, for every $S\subseteq M$, $f(S)\geq 0$.

Submodular functions arise in many combinatorial optimization scenarios. For example, given a graph $G=(V,E)$, let the cut function of $G:2^V\rightarrow \mathbb R$ be the function that assigns every subset of the vertices $S$ the size of the cut $(S,V-S)$. Observe that a cut function is submodular. Similarly, the objective function in classic problems such as SET-COVER is also submodular. The decreasing marginal utilities property also makes submodular functions a popular object of study in various economic settings, such as combinatorial auctions and influence maximization in social networks (see, e.g., \cite{Lehmann:2001:CAD:501158.501161} and \cite{kempe2003maximizing}, respectively). 

The formal study of approximation algorithms\footnote{In general representing a submodular function $f$ takes space that is exponential in $m$. Therefore, we assume that $f$ is accessed via value queries: given $S$, what is $f(S)$? All the algorithms we mention make $poly(m)$ value queries and run in time $poly(m)$.} for submodular optimization problems has started back in 1978: Nemhauser, Wolsey and Fisher \cite{nemhauser1978analysis}  showed that for monotone and non-negative submodular functions, a natural greedy algorithm gives a $(1-1/e)$-approximation to \emph{monotone} submodular maximization under a cardinality constraint.

However, the objective function of problems like $MAX-CUT$ is submodular but not necessarily monotone so the above algorithm does not guarantee any bounded approximation ratio. To handle this lacuna, Feige et al. \cite{feige2011maximizing} initiated the study of algorithms for maximizing non-negative submodular functions. They developed a local search algorithm that finds a set whose value is at least $\frac{1}{3}$ of the value of the maximum-value set. In addition, they presented a randomized $\frac{2}{5}$-approximation.
Their algorithms are complemented by the following hardness result: any algorithm that obtains a better approximation ratio than $\frac{1}{2}$, must make exponentially (in $m$) many queries.
This hardness result was later extended \cite{DBLP:journals/corr/abs-1202-2792}: unless $NP=RP$, no $\frac{1}{2}+\epsilon$ polynomial time approximation algorithm exists.

Subsequent work led to improvements in the approximation ratio. Oveis Gharan and Vondrak showed a randomized $0.41$-approximation algorithm \cite{DBLP:journals/corr/abs-1007-1632}.
Feldman et al. \cite{Feldman:2011:NSM:2027127.2027164} improved this to $0.42$, again using a randomized algorithm.
Finally, Buchbinder, Feldman, Naor and Schwartz \cite{Buchbinder:2012:TLT:2417500.2417835} presented a randomized algorithm that achieves an approximation ratio of $\frac 1 2$, which as mentioned above, is the best possible.

However, there was no progress at all regarding the approximation ratio achievable by deterministic algorithms, and prior to our work no deterministic algorithm was known to obtain a ratio better than $\frac{1}{3}$. Indeed, this open question was recently mentioned, e.g., by \cite{feldman2014maximizing}.
Our paper makes progress on this front:

\begin{theorem} \label{mainresults1}
For every $\epsilon>0$ there is a deterministic algorithm that provides a $(\frac 2 5-\epsilon)$-approximation to unconstrained submodular maximization and makes $O\left(\frac{1}{\epsilon} \cdot m \cdot \log m\right)$ value queries.
\end{theorem}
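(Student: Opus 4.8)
The plan is to build on the Feige–Mirrokni–Vondrak (FMV) local search algorithm, which deterministically produces a set $S$ that is a $\frac13$-approximation. The key structural fact I would exploit is that the FMV local search does more than guarantee value at least $\frac13 f(OPT)$: a local optimum $S$ satisfies, for every element $j$, that neither adding $j$ to $S$ nor removing $j$ from $S$ increases the value (up to a multiplicative $(1+\epsilon/m)$ slack). From these local optimality inequalities, summed over elements of $OPT\setminus S$ and of $S\setminus OPT$ by submodularity, one obtains the sharper pair of bounds $f(S)\ge f(S\cup OPT)$ and $f(S)\ge f(S\cap OPT)$ (again up to $\epsilon$ error). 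These two inequalities, together with $f(S\cup OPT)+f(S\cap OPT)\ge f(OPT)+f(\emptyset)\ge f(OPT)$, are what really drive the analysis, and they are the engine I would reuse at each level of the recursion.

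The main idea, as the abstract signals, is \emph{recursive composition at depth $2$}. First I would run local search on the full ground set $M$ to get a locally optimal set $S_1$. The improvement over $\frac13$ comes from extracting extra structural information: I would then run the local search algorithm on restricted instances — for instance on the ground set $S_1$ and on the ground set $M\setminus S_1$ (equivalently, fixing certain elements in or out and optimizing over the rest) — to obtain auxiliary locally optimal sets. Each such restricted run yields its own system of local-optimality inequalities relating the returned set to the projection of $OPT$ onto that restricted domain. The algorithm's final output would be the best (maximum-value) set among the candidates produced, $S_1$ together with the depth-$2$ sets and natural combinations of them.

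The heart of the proof is the case analysis on where $OPT$ sits relative to the first-level solution $S_1$. I would parametrize the unknown optimum by quantities such as $f(OPT\cap S_1)$, $f(OPT\setminus S_1)$, $f(OPT\cup S_1)$, and the value $f(S_1)$ itself, and write down the full collection of inequalities furnished by the first-level and second-level local optima. The claim is that in every regime at least one of the maintained candidates achieves value $\ge(\frac25-\epsilon)f(OPT)$. To prove this I would set up a linear (or low-degree) program: treat the relevant set-function values as nonnegative variables subject to the submodularity and local-optimality constraints, and show that the maximum candidate value is provably at least $\frac25 f(OPT)$ — the factor $\frac25$ emerging as the optimal value of this LP, which explains why depth $2$ suffices to clear $\frac13$ but the method stalls exactly at $\frac25$.

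The query bound is the routine part: a single FMV local search makes $O(\frac1\epsilon\, m\log m)$ value queries (each improving step gains a $(1+\epsilon/m)$ factor, so $O(\frac1\epsilon m\log m)$ steps suffice before the value can no longer grow past $f(OPT)$), and running a constant number of such searches at depths $1$ and $2$ preserves this bound up to a constant factor. I expect the genuine obstacle to be the case analysis in the previous paragraph: identifying the \emph{right} second-level restricted instances so that their local-optimality inequalities, combined with the first-level ones, are jointly strong enough to force the $\frac25$ bound in the worst case. Getting a clean, tight LP whose optimum is exactly $\frac25$ — and verifying feasibility of the worst-case configuration — is where the real work lies; the danger is that a poorly chosen set of auxiliary instances yields a weaker constant, so the design of the recursion and the accounting of all inequalities must be done carefully and in concert.
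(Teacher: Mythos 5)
Your overall architecture (top-level local search, auxiliary local searches on the two sides of $S_1$, take the best candidate, balance the resulting linear inequalities) is the paper's strategy in outline, but three of your load-bearing steps are wrong or missing. First, the ``key structural fact'' you start from, $f(S\cup OPT)+f(S\cap OPT)\ge f(OPT)+f(\emptyset)$, is false: submodularity gives the \emph{reverse} inequality $f(S)+f(OPT)\ge f(S\cup OPT)+f(S\cap OPT)$, and if your version held then together with $2f(S)\ge f(S\cup OPT)+f(S\cap OPT)$ it would make every local maximum a $\frac12$-approximation, contradicting the known tightness of $\frac13$ for local search. The correct driving inequality necessarily involves the complement: $2(1+\epsilon)f(S)+f(S^{c})\ge f(OPT)+f(M)+f(\emptyset)$ (the paper's Claim \ref{LMSTHIRD}, proved via $f(S^{c})+f(OPT\cup S)\ge f(OPT\cap S^{c})+f(M)$ and $f(OPT\cap S^{c})+f(OPT\cap S)\ge f(OPT)+f(\emptyset)$). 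The $f(S^{c})$ term is not a nuisance — it is the whole point: when $f(S)$ is stuck near $\frac13 f(OPT)$, $f(S^{c})$ must also be large, and this is the quantity the recursion feeds on.

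Second, your two restricted instances are underspecified in a way that breaks the combination step. One subproblem is indeed $f$ restricted to $S^{c}$, but the other must be the function $T\mapsto f(S^{c}\cup T)$ on ground set $S$ (find a good \emph{superset} of $S^{c}$), not $f$ restricted to $S$; only then does submodularity (with $T_{1}\subseteq S^{c}$) give the linchpin inequality $f(T_{1}\cup T_{2})\ge f_{1}(T_{1})+f_{2}(T_{2})-f(S^{c})$, which is the precise form of the ``natural combination'' you gesture at. Third, the depth as you describe it is too shallow: running plain local search on the two restricted instances (two levels of local search total) still yields only $\frac13$ in the worst case where $f(M)=f(\emptyset)=0$ — in the paper's notation, $\alpha_{1}(x_{OPT},0,0)\ge\frac{1-\epsilon}{3}x_{OPT}$ and no better. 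The gain of the composed algorithm on a subinstance is proportional to the subinstance's boundary values $f_{1}(M_{1})=f_{2}(\emptyset)=f(S^{c})$, so you must apply the \emph{composed} (depth-one) algorithm, not bare local search, to each subinstance — three nested levels of local search in total — before the balancing argument yields $f(S)=\frac25 f(OPT)$ at the intersection point. (Your query count for a single local search is also optimistic — each improving pass costs $m$ queries, giving $O(\frac{1}{\epsilon}m^{2}\log m)$ per search — though the paper's own headline bound shares this discrepancy with its body.)
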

Our algorithm can be seen as a recursive composition of solutions obtained by the local search algorithm of \cite{feige2011maximizing}. Here is a sketch of our algorithm (see the technical sections for a more accurate description):
\begin{enumerate}
\item Find a local maximum\footnote{$S$ is a local maximum if $f(S)\geq f(S\cup T)$ and $f(S)\geq f(S-T)$ for all $T$.} $S$.
\item\label{alg-step2} Use an approximation algorithm to obtain $T_1$ with high value when $f$ is restricted to $S^c$. \footnote{We use the notation $S^c$ to denote the complement set of $S$.}
\item\label{alg-step3} Use an approximation algorithm to obtain $T_2\subseteq S$ such that $f\left(S^c\cup T_2\right)$ is high.
\item Return  $\argmax \left\{S,S^c,T_1\cup T_2\right\}$.
\end{enumerate}
For the base case we use in Steps \ref{alg-step2} and \ref{alg-step3} the local search algorithm of \cite{kempe2003maximizing}. After obtaining an algorithm with an improved approximation ratio, this new algorithm can be used recursively in Steps \ref{alg-step2} and \ref{alg-step3}. We will show that when the recursion depth is $2$ the approximation ratio is $\frac 2 5$. The approximation ratio of the algorithm should in principle improve as the recursion depth increases, but we do not know how to analyze it. This is the main open question that this paper leaves.

\section{A $\frac 2 5$-Approximation Algorithm}

In this section we prove that there exists a $\frac 2 5$-approximation algorithm to the problem of maximizing a non-negative submodular function. To gain some intuition about our algorithm we will see how one can get a better than $\frac 1 3$-approximation (Subsection \ref{subsec-intuition}). We will then give a formal description and analysis of our algorithm (Subsection \ref{subsec-algorithm} onwards).

\subsection{Warm-Up: Improving over the $\frac 1 3$-Approximation Factor}\label{subsec-intuition}

We now show that we can get an approximation ratio better than $\frac 1 3$ approximation. Let $S$ be a local maximum $S$. Feige et al. \cite{feige2011maximizing} show that $2f\left(S\right) + f\left(S^c\right)\geq f\left(OPT\right)$,
which implies that either $S$ or $S^c $ provides a $\frac{1}{3}$ approximation (we let OPT denote the optimal solution). In fact, a slightly stronger inequality holds: $2f\left(S\right) + f\left(S^c\right)\geq f\left(OPT\right)+f\left(M\right)+f\left(\emptyset\right)$.

%First thing to note is that a local maximum provides a slightly stronger inequality:
%$2f\left(S\right) + f\left(S^c\right)\geq f\left(OPT\right) + f\left(M\right) + f\left(\emptyset\right)$.
Consider a scenario in which the local-maximum algorithm returns only a $\frac{1}{3}$ approximation.
In that case we have $f\left(S\right)= \frac{f(OPT)}{3}$ and $f\left(S^c\right)= \frac{f(OPT)}{3}$.
We will show that two other sets, $S^c\cup OPT$ and $S^c\cap OPT$ have high value. For the first set, observe that (using submodularity for the first inequality, $S$ being a local maximum for the second, and $f(S)= \frac {f(OPT)} 3$ for the third one):
\begin{align}\label{eqn-intuition}
f\left(S^c\cup OPT\right) &\geq f\left(OPT\right) + f\left(M\right) - f\left(OPT\cup S\right)\\\nonumber
&\geq f\left(OPT\right) + f\left(M\right) - f\left(S\right) \\\nonumber
&\geq ~\frac{2f(OPT)}{3}\nonumber
\end{align}
Similarly $f\left(S^c \cap OPT\right)\geq ~\frac{2f(OPT)}{3}$.
In other words:
\begin{enumerate}
\item There is a subset of $S^c$ with high value. In particular, the value is at least $2f(S^c)$.
\item There is a superset of $S^c$ with high value. In particular, the value is at least $2f(S^c)$.
\end{enumerate}
Now we would like to (approximately) find these subsets. Observe that we actually have a (smaller) submodular maximization problem. To see this, let $M_1 = S^c$ and $f_1:2^{M_1}\rightarrow \mathbb R$ be the submodular function $f_1(T)=f(T)$ for every $T\subseteq M_1$. Observe that $f(S^c)=f\left(M_1\right) = \frac{f(OPT)}{3}$. In addition, we have already seen that $f(OPT_{M_1})\geq \frac {f(OPT)} 3 $, where $OPT_{M_1}$ is a maximum value set of $f_1$.
%I.e., we want to maximize $f$ with the set of items $M_1 = S^c$,
%the optimal solution is at least $\frac{2}{3}$ and $f\left(M_1\right) = \frac{1}{3}$.

We now run the local search algorithm on the the function $f_1$ and find a local maximum $T_1\subseteq S^c$. By the properties of a local maximum we are guaranteed that
$$
2f\left(T_1\right) + f\left(M_1-T_1\right)\geq f\left(OPT_{M_1}\right)+f\left(M_1\right)+f\left(\emptyset\right)
$$
Notice that by our bounds for $f(OPT_{M_1})$ and $f(M_1)$ it must be that $f(T_1)\geq \frac {f(OPT)} 3 $ or that $f(M_1-T_1)\geq \frac {f(OPT)} 3 $. If one of them has higher value we already reach an approximation ratio better than $\frac 1 3$. Hence, we assume that $f\left(T_1\right) = \frac{f(OPT)}{3}$, then $f\left(M_1-T_1\right)=\frac{f(OPT)}{3}$.
 
We now repeat a similar argument. Similarly to inequality (\ref{eqn-intuition}),
\begin{align}\label{eqn-f'_1}
f\left((M_1-T_1)\cup OPT_{M_1}\right) \geq f\left(OPT_{M_1}\right) + f\left(M_1\right) - f\left(T_1\right)\geq \frac{f(OPT)}{3}
\end{align}
Let $M'_1=T_1$. Let $f'_1:2^{M'_1}\rightarrow\mathbb R$ be the submodular function where $f'_1(T)=f(T\cup (M_1-T_1)$ for every $T\subseteq M'_1$. Observe that $f'_1\left(M'_1\right)= \frac{f(OPT)}{3} $ and $f'_1\left(\emptyset\right)= \frac{f(OPT)}{3}$. Furthermore, by inequality (\ref{eqn-f'_1}) the maximum value of $f'_1$ is at least $\frac {2f(OPT)} 3 $.
%This time we have a new sub-problem with $f\left(T_1^c\right)=\frac{1}{3}OPT $, $f\left(OPT_{M_1}\cup T_1^c\right)\geq \frac{2}{3}OPT $, $f\left(M_1\right)\geq \frac{1}{3}OPT$. 
Let $T'_1$ be a local maximum of $f'_1$. Using \cite{feige2011maximizing} again we have that: 
%The value of a local maximum $T'_1$ of $f'_1$ (or its complement) is therefore at least
$$
\max(f(T'_1),f(M'_1-T'_1))\geq \frac{1}{3} \left(f\left(M_1-T_1\right)+f\left(M_1\right)+f\left(OPT_{M_1}\cup (M_1-T_1)\right) \right)\geq \frac{4}{9}\cdot f(OPT)
$$
This already establishes that the approximation ratio is better than $\frac 1 3$, but we can do even better: find $T_2\subseteq S$ such that $f\left(S^c \cup T_2\right)$ has high value.
Combine $T_1$ and $T_2$ to sum the improvements, since submodularity implies that
$f\left(T_1\cup T_2\right)\geq f\left(S^c\right) + \left(f\left(T_1\right)-f\left(S^c\right)\right) + \left(f\left(S^c \cup T_2\right)-f\left(S^c\right)\right)$.

\subsection{The Algorithm} \label{subsec-algorithm}

We now present our main result: an improved deterministic algorithm for maximizing a submodular function. As discussed earlier, our algorithm will be based on repeatedly finding local maxima of different subsets and combining the result. However, we do not know how to find a local maximum with a polynomial number of value queries, thus we have to settle on approximate solutions:

\begin{definition}
Fix a function $f:2^M\rightarrow \mathbb{R}$. $S$ is a \emph{{}$\left(1+\epsilon\right)-\text{local maximum}$} of $f$ if for every $T\subset M$:
\begin{enumerate}
\item $\left(1+\epsilon\right)f\left(S\right) \geq f\left(S\cup T\right)$.
\item $\left(1+\epsilon\right)f\left(S\right) \geq f\left(S\cap T\right)$.
\end{enumerate}
\end{definition}

We will use algorithms from \cite{feige2011maximizing} and \cite{Buchbinder:2012:TLT:2417500.2417835} for our
``base case''.
\begin{theorem}[essentially \cite{feige2011maximizing, Buchbinder:2012:TLT:2417500.2417835}]\label{theoremm}
There is an algorithm $\LMS$ that gets as a input function $f$, a set of elements $M$ and $\epsilon$ and returns a set $S$ that is $\left(1+\epsilon\right)-\text{approximate local maximum}$. In addition, $f(S)\geq \frac {f(OPT)} 3$, where OPT is the set with the highest value. The algorithm makes $O\left(\frac{1}{\epsilon}\cdot m^2\cdot \log m\right)$ value queries.
\end{theorem}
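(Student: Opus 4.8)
The plan is to establish Theorem \ref{theoremm} by combining a classical local-search procedure with a standard acceleration-by-thresholding trick, and then separately invoking a known base-case approximation. First I would recall the exact local-search step of \cite{feige2011maximizing}: starting from an arbitrary set $S$ (say $S=\emptyset$ or a singleton), repeatedly check whether adding or deleting a single element strictly increases $f$, and if so perform the improving swap. When no single-element addition or deletion improves $f(S)$ at all, submodularity guarantees that $S$ is an exact local maximum, i.e.\ $f(S)\geq f(S\cup T)$ and $f(S)\geq f(S\cap T)$ for \emph{all} $T$, not merely for single-element $T$; this is the key lemma of \cite{feige2011maximizing} and I would cite it rather than reprove it. The approximation guarantee $f(S)\geq \tfrac{1}{3}f(OPT)$ then follows from the inequality $2f(S)+f(S^c)\geq f(OPT)$ already quoted in the warm-up (applied together with the fact that local search can be run to also control $f(S^c)$, or simply by taking the better of $S$ and $S^c$ at the end).

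The main obstacle is the query complexity: exact local search may take a number of improving steps that is not polynomially bounded, since each step need only increase $f$ by an arbitrarily small amount. To fix this I would introduce the standard multiplicative threshold: only perform an improving swap when it increases the current value by a factor of at least $(1+\tfrac{\epsilon}{m})$ (or a comparable threshold). The plan for this step is twofold. First, a potential argument bounds the number of accepted swaps: each accepted swap multiplies $f(S)$ by at least $1+\tfrac{\epsilon}{m}$, and since the value stays between the initial positive value and $f(OPT)$, the number of swaps is $O\!\left(\tfrac{m}{\epsilon}\log m\right)$; as each swap requires scanning all $m$ elements (each with a constant number of value queries), the total is $O\!\left(\tfrac{1}{\epsilon}\,m^2\log m\right)$ queries, matching the claimed bound. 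Second, I would verify that the terminal set $S$ is now a $(1+\epsilon)$-approximate local maximum in the sense of the Definition: when no single swap improves by more than the threshold factor, the single-element version of submodularity accumulates over the at most $m$ elements in a symmetric difference to yield $(1+\epsilon)f(S)\geq f(S\cup T)$ and $(1+\epsilon)f(S)\geq f(S\cap T)$ for all $T$. This telescoping step — converting the per-element near-stationarity into a global approximate local-maximum guarantee with only a multiplicative $(1+\epsilon)$ loss — is the part that requires care.

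For the approximation-ratio half of the statement, the cleanest route is to invoke an existing base-case algorithm directly. The guarantee $f(S)\geq \tfrac{1}{3}f(OPT)$ can be obtained either from the approximate version of $2f(S)+f(S^c)\geq f(OPT)$ (returning the better of $S$ and $S^c$, losing only a $(1+\epsilon)$ factor that can be absorbed into the statement's $\epsilon$) or, more robustly, by delegating to the algorithm of \cite{Buchbinder:2012:TLT:2417500.2417835}, which already yields at least a $\tfrac{1}{3}$-fraction (indeed $\tfrac{1}{2}$) of $f(OPT)$; this is why the theorem is attributed to both references. I would therefore run the thresholded local search to produce the $(1+\epsilon)$-approximate local maximum $S$ required by the later recursive construction, and separately guarantee the value bound so that the returned set satisfies both properties simultaneously. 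The one subtlety to check is that a single set $S$ can be made to satisfy the local-maximum property \emph{and} the $\tfrac{1}{3}$ value bound at once; this holds because the thresholded local maximum itself already achieves $f(S)\geq \tfrac{1}{3(1+\epsilon)}f(OPT)$ by the approximate form of the Feige et al.\ inequality, so no separate algorithm is strictly needed and the $\epsilon$ can be rescaled to clean up the constant.
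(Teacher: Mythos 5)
Your outline of the thresholded local search (the multiplicative acceptance criterion, the $O\!\left(\frac{m}{\epsilon}\log m\right)$ bound on the number of accepted swaps, and the telescoping argument converting single-element near-stationarity into the $(1+\epsilon)$-approximate local maximum property) matches the machinery of Feige et al.\ that the paper also relies on. The gap is in the last step, where you must produce a \emph{single} set that is simultaneously a $(1+\epsilon)$-approximate local maximum and satisfies $f(S)\geq \frac{1}{3}f(OPT)$. Your closing claim --- that the thresholded local maximum itself already achieves $f(S)\geq \frac{1}{3(1+\epsilon)}f(OPT)$ by the approximate form of the Feige et al.\ inequality --- is false. That inequality reads $2(1+\epsilon)f(S)+f(S^{c})\geq f(OPT)+f(M)+f(\emptyset)$; it bounds a combination of $f(S)$ and $f(S^{c})$ and therefore only guarantees that the \emph{better} of $S$ and $S^{c}$ is a near-$\frac{1}{3}$ approximation. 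A local maximum by itself can have value arbitrarily far from $f(OPT)$ (take $M=\{1,2\}$ with $f(\emptyset)=f(\{1\})=f(\{1,2\})=0$ and $f(\{2\})=1$: the set $\{1\}$ is an exact local maximum of value $0$), and $S^{c}$ need not be a local maximum, so returning the better of $S$ and $S^{c}$ does not prove the theorem as stated.

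The missing idea, which is exactly how the paper resolves this, is to \emph{seed} the local search: first run the deterministic algorithm of Buchbinder et al.\ to obtain a set $T$ with $f(T)\geq \frac{1}{3}f(OPT)$, and then run the thresholded local search of Feige et al.\ \emph{starting from $T$}. Since every accepted swap strictly increases the value, the terminal set $S$ satisfies $f(S)\geq f(T)\geq \frac{1}{3}f(OPT)$ while also being a $(1+\epsilon)$-approximate local maximum. You do mention delegating to Buchbinder et al.\ ``separately,'' but running the two algorithms separately yields two different sets, each carrying only one of the two required properties; the composition via initialization is what makes a single set carry both. (A minor additional point: your potential argument needs the initial value to be at least $f(OPT)/\mathrm{poly}(m)$, so the search must start from the best singleton or from $T$ as above, not from $\emptyset$, where $f(\emptyset)$ may be $0$.)
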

Feige et al \cite{feige2011maximizing} show how to start with a set $T$ and find a set $S$ that is a $(1+\epsilon)$-approximate local maximum with $O\left(\frac{1}{\epsilon}\cdot m^2\cdot \log m\right)$ value queries. Their algorithm guarantees that $f(S)\geq f(T)$ and that either $f(S)\geq \frac {f(OPT)} {3(1+\epsilon)}$ or $f(S^c)\geq \frac {f(OPT)} {3(1+\epsilon)}$. This weaker guarantee suffices to our analysis, but it would be simpler to first obtain a set $T$ with $f(T)\geq \frac {f(OPT)} {3}$ via the deterministic algorithm of \cite{Buchbinder:2012:TLT:2417500.2417835}, and then obtain an approximate local maximum $S$ by running the algorithm of \cite{feige2011maximizing} with $T$ as the initial set.

We are now ready to present our recursive algorithm for unconstrained submodular maximization. The input is a submodular function $f$ defined on a set of elements $M$, and a parameter $nrounds$ that determines the recursion depth. We also assume some fixed accuracy parameter $\epsilon>0$. The algorithm returns a set that will be proved to provide a high value. 

\subsubsection*{ $\ALG_{\epsilon}\left(f,M,nrounds\right):$}
\begin{enumerate}
\item Define $f'$: $f'\left(T\right)=f\left(T\right)-\min\left(f\left(\emptyset\right),f\left(M\right)\right)$. Let $S=\LMS\left(f',M\right)$.
\item\label{step-stopping} If $S=M$, $S=\emptyset$, or $nrounds=0$ then return $S$. 
\item\label{T1}  Let $M_{1}=S^{c}$, define $f_1$: $f_1\left(T\right)=f'\left(T\right)$. Let $T_{1}=\ALG_{\epsilon}\left(f_{1},M_{1},nrounds-1\right)$.
\item\label{T2} Let $M_{2}=S$, define $f_2$: $f_{2}\left(T\right)=f'\left(S^{c}\cup T\right)$. Let $T_{2}=\ALG_{\epsilon}\left(f_{2},M_{2},nrounds-1\right)$.
\item Return $\argmax_{R\in \left\{S,\left(T_{1}\cup T_{2}\right),M,\emptyset\right\}}f\left(R\right)$.
\end{enumerate}

\begin{theorem}\label{thm-main}
Let $f(\ALG_{\epsilon}\left(f,M,2\right))\geq (\frac {2} 5-\epsilon)\cdot f(OPT)$. Moreover, the algorithm makes $O\left(\frac{1}{\epsilon}\cdot m^2\cdot \log m\right)$ value queries in this case.
\end{theorem}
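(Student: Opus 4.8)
The plan is to work throughout with the normalized function $f'(T)=f(T)-c$, where $c=\min(f(\emptyset),f(M))\ge 0$, so that $\min(f'(\emptyset),f'(M))=0$ while $f$ and $f'$ share the same maximizers. Since $f(R)=f'(R)+c$ and $c\ge 0$, it suffices to exhibit a returned candidate $R$ with $f'(R)\ge(\tfrac25-O(\epsilon))f'(OPT)$: adding back $c$ only raises the ratio, because $f'(R)+c\ge(\tfrac25-O(\epsilon))(f(OPT)-c)+c\ge(\tfrac25-O(\epsilon))f(OPT)$. I first dispose of the stopping cases of Step~\ref{step-stopping}. If $\LMS$ returns $S=M$, the approximate local-maximum property gives $(1+\epsilon)f'(M)\ge f'(M\cap T)=f'(T)$ for every $T$, hence $f'(M)\ge f'(OPT)/(1+\epsilon)$, well above $\tfrac25 f'(OPT)$; the case $S=\emptyset$ is symmetric. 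So I assume $\emptyset\neq S\neq M$ and the algorithm recurses.

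Next I set up the combination step. Since $T_1\subseteq S^c$ and $T_2\subseteq S$ are disjoint, submodularity applied to $T_1\cup T_2$ and $S^c$ yields $f'(T_1\cup T_2)\ge f'(T_1)+f'(S^c\cup T_2)-f'(S^c)$, so the two recursive gains add, and it remains to lower bound $f'(T_1)$ and $f'(S^c\cup T_2)$. Because every recursive call returns the $\argmax$ over a list containing its own ground set and $\emptyset$, scored by the call's input function, I immediately get $f'(T_1)\ge f'(S^c)$ and $f'(S^c\cup T_2)\ge\max\{f'(S^c),f'(M)\}$ (from the candidates $M_1=S^c$ in Step~\ref{T1} and $M_2=S,\emptyset$ in Step~\ref{T2}). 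For the sharper bounds I combine the value guarantee of the recursion with two submodularity identities pairing $OPT$ with $S$ and $S^c$: using $f'(S\cap OPT)+f'(S^c\cap OPT)\ge f'(OPT)$ and $f'(S\cup OPT)+f'(S^c\cup OPT)\ge f'(M)+f'(OPT)$ together with the local-maximum bounds $f'(S\cap OPT),f'(S\cup OPT)\le(1+\epsilon)f'(S)$, I obtain $\max_{T\subseteq S^c}f'(T)\ge f'(S^c\cap OPT)\ge f'(OPT)-(1+\epsilon)f'(S)$ and $\max_{T\subseteq S}f'(S^c\cup T)\ge f'(S^c\cup OPT)\ge f'(M)+f'(OPT)-(1+\epsilon)f'(S)$, where the second uses $S^c\cup(S\cap OPT)=S^c\cup OPT$.

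The remaining ingredient is a quantitative guarantee for the depth-$1$ calls, converting those subproblem optima into lower bounds on the returned values $f'(T_1)$ and $f'(S^c\cup T_2)$. I prove this depth-$1$ lemma by unrolling one level: inside $\ALG_\epsilon(g,N,1)$, $\LMS$ returns a local maximum $S_1$ with $g(S_1)\ge\tfrac13\max_{T\subseteq N}g(T)$, and the two depth-$0$ calls return approximate local maxima whose values are at least $\tfrac13$ of their respective optima (the base guarantee of Theorem~\ref{theoremm}). Applying the same combination inequality and the same local-maximum/submodularity pairings one level down --- now between $S_1$ and $OPT_g$ --- produces a lower bound of the same shape as the displayed inequality in the warm-up (a $\tfrac13$-combination of $g(N)$, of the value on the inner local maximum's complement, and of $g$ on $OPT_g$ unioned with that complement), which after further lower-bounding the inner optimum by submodularity I express purely in the top-level quantities $f'(S),f'(S^c),f'(M),f'(OPT)$.

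Finally I treat the candidate values as variables constrained by all the collected linear inequalities --- the FMV inequality $2(1+\epsilon)f'(S)+f'(S^c)\ge f'(OPT)+f'(M)+f'(\emptyset)$, the local-maximum and submodularity bounds, the combination inequality, and the depth-$1$ bounds --- and show that $\max\{f'(S),f'(T_1\cup T_2),f'(M),f'(\emptyset)\}\ge\tfrac25 f'(OPT)$ by a case analysis on the governing parameters (principally the ratio $f'(S)/f'(OPT)$ and which term attains each of the maxima above), with the extremal regime tight at $\tfrac25$. The $(1+\epsilon)$ slack from the three nested approximate local maxima multiplies to a $1-O(\epsilon)$ factor, absorbed into the $-\epsilon$ after rescaling the accuracy parameter. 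For the query bound, the depth-$2$ recursion issues at most seven $\LMS$ computations on ground sets of size at most $m$, each costing $O(\tfrac1\epsilon m^2\log m)$ by Theorem~\ref{theoremm}, giving the claimed total. I expect the case analysis of this last paragraph to be the main obstacle: the constants produced by the depth-$1$ lemma must be pinned down precisely so that the worst case over all regimes lands at $\tfrac25$ and no lower --- note that the symmetric case $f'(S)=f'(S^c)=\tfrac13 f'(OPT)$ already yields $\tfrac49$, so it is not binding, and locating the true extremal configuration and certifying that every candidate stays above $\tfrac25$ there is the delicate part.
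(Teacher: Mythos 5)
Your plan follows the paper's proof essentially step for step: normalize via $f'$, combine the two recursive outputs through the submodularity inequality $f'(T_1\cup T_2)\ge f'(T_1)+f'(S^c\cup T_2)-f'(S^c)$, lower-bound the two subproblem optima by pairing the approximate-local-maximum inequalities with submodularity applied to $OPT$ and $S$, and feed a depth-$1$ guarantee (itself obtained by unrolling one more level and using the depth-$0$ bound $3f(S)\ge f(OPT)+f(M)+f(\emptyset)$) into a final balance against $f'(S)$. The only part you leave open --- the ``delicate case analysis'' --- is in the paper a single balancing step rather than a multi-regime analysis: once the depth-$1$ guarantee is pinned down as $\alpha_1(x_{OPT},0,x_M)\ge\frac{1-\epsilon}{3}x_{OPT}+\frac{1-\epsilon}{2}x_M$ (and symmetrically in $x_0$), the two recursive calls together give $f'(T_1\cup T_2)\ge\frac{2}{3}\left(f'(OPT)-f'(S)\right)-\frac{5}{3}\epsilon f'(OPT)$, and minimizing $\max\left(f'(S),\frac{2}{3}(f'(OPT)-f'(S))\right)$ at the intersection point $f'(S)=\frac{2}{5}f'(OPT)$ yields the stated ratio.
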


\subsection{Proof of Theorem \ref{thm-main}}

We first analyze the running time of the algorithm. Let $L_{\epsilon}\left(m\right)$ be the maximum number of queries $\LMS$ makes on any submodular function that is defined on $m$ elements.
%We can express the maximum number of queries the algorithm makes in terms of $L_{\epsilon}\left(m\right)$:

\begin{lemma}
For every integer $n\geq 0$ the number of queries used by $\ALG_{\epsilon}\left(f,M,n\right)$ is $O\left(m\cdot L_{\epsilon}\left(m\right)\right)$. As a corollary, for every integer $n\geq 0$ the number of queries used by $\ALG_{\epsilon}\left(f,M,n\right)$ is $O\left(\frac{1}{\epsilon} \cdot m^3\log m \right)$.
\end{lemma}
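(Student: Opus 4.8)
The plan is to analyze the recursion tree of $\ALG_{\epsilon}(f,M,n)$. Each invocation performs exactly one call to $\LMS$ on its current ground set—costing at most $L_{\epsilon}$ of that ground set's size—plus $O(1)$ additional value queries (the constant shift of $f$ in Step 1 and the final $\argmax$ over four sets). I would first record the structural fact that drives everything: in Steps \ref{T1} and \ref{T2} the two recursive calls are made on the ground sets $M_1=S^{c}$ and $M_2=S$, which partition $M$. Hence, across any single level of the recursion tree the ground sets are pairwise disjoint subsets of $M$, so their sizes sum to at most $m$ and there are at most $m$ nodes on that level.

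Next I would bound the depth independently of $n$. Whenever a node actually recurses (the stopping rule in Step \ref{step-stopping} did not fire) we have $S\neq M$ and $S\neq\emptyset$, so $1\leq|S|\leq|M|-1$; consequently both children, with ground sets $S$ and $S^{c}$, have size at most $|M|-1$. Thus the ground-set size strictly decreases along every root-to-leaf path, and since it starts at $m$ and cannot drop below $1$, the tree has depth at most $m$ regardless of how large $nrounds$ is. This is the crucial point: the naive recurrence $Q(m)\leq L_{\epsilon}(m)+Q(|S|)+Q(|S^{c}|)$ looks as though it could blow up with the depth parameter, but the strict shrinking of the ground sets caps the depth at $m$ and makes the final bound uniform in $n$.

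To assemble the total, I would sum the per-node $\LMS$ cost level by level. Using that $L_{\epsilon}$ is nondecreasing and superadditive over disjoint ground sets—which holds for the explicit bound $L_{\epsilon}(m)=O(\frac{1}{\epsilon}m^{2}\log m)$, since $\sum_{i}s_i^{2}\leq(\sum_{i}s_i)^{2}\leq m^{2}$ whenever the level sizes satisfy $\sum_{i}s_i\leq m$—the total $\LMS$ cost on any single level is at most $L_{\epsilon}(m)$. Multiplying by the at-most-$m$ levels gives total $\LMS$ cost $O(m\cdot L_{\epsilon}(m))$. The $O(1)$ overhead per node contributes at most a constant times the number of nodes, which is at most (nodes per level)$\times$(levels) $=O(m^{2})$, and this is dominated by $m\cdot L_{\epsilon}(m)$. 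This establishes the first claim.

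For the corollary I would simply substitute $L_{\epsilon}(m)=O(\frac{1}{\epsilon}m^{2}\log m)$ from Theorem \ref{theoremm} into $O(m\cdot L_{\epsilon}(m))$, yielding $O(\frac{1}{\epsilon}m^{3}\log m)$. The main obstacle is conceptual rather than computational: one must resist analyzing the recursion through a size-only recurrence, which hides the exponential-looking branching, and instead exploit the partition structure together with the strict decrease in ground-set size, so that both the per-level cost and the total depth are controlled independently of $nrounds$.
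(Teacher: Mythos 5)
Your proof is correct and rests on exactly the same two facts the paper uses: the two recursive calls are made on $S$ and $S^{c}$, which partition $M$, and both have size strictly less than $|M|$. The paper simply packages this as the recurrence $T_\epsilon(m)\leq \max_{d}\left(T_\epsilon(d)+T_\epsilon(m-d)\right)+L_{\epsilon}(m)$ solved by induction to get $T_\epsilon(m)\leq m\cdot L_{\epsilon}(m)$ --- so the ``size-only recurrence'' you warn against is precisely the paper's argument, and it works fine; your recursion-tree, level-by-level account is an equivalent rephrasing.
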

\begin{proof}
Denote by $T_\epsilon\left(k\right)$ the maximal number of queries $\ALG_{\epsilon}$ makes on a submodular function with $|M|=k$.
%Given a submodular function $f$, we count the number of queries that $\ALG_{\epsilon}\left(f,M,n\right)$ makes.
In iterations where the algorithm stops at Step \ref{step-stopping}, the number of queries is at most $L_{\epsilon}\left(|M|\right)$.
Else, we have that $|M|>|S|>0$. The algorithm then recursively solves two subproblems,
one on $S$ and the other on $S^c$.
Thus, $T_\epsilon\left(m\right)$ can be bounded by:
$$T_\epsilon\left(m\right)\leq \max_{d=1..m-1}\left(T_\epsilon\left(d\right)+T_\epsilon\left(m-d\right)\right) + L_{\epsilon}\left(m\right)$$
Solving that recursion by induction on $m$ gives that $T_\epsilon\left(m\right) \leq L_{\epsilon}\left(m\right) \cdot m$.
\end{proof}
The next lemma bounds the approximation ratio when the recursion depth is $2$ (i.e., $nrounds=2$). We note that using a larger value of $nrounds$ should in principle yield an improved ratio, but currently we are unable to formally analyze this case. The next subsections are devoted to proving the lemma. 

\begin{lemma} \label{BIGLEMMA}
For every $\epsilon>0$ and for every submodular function $f$ with $f\left(M\right)\geq 0$ and $f\left(\emptyset\right)\geq 0$:
$$f\left(\ALG_{\epsilon}\left(f,M,2\right)\right)\geq\left(\frac{2}{5}- \epsilon\right) f\left(OPT\right) $$
\end{lemma}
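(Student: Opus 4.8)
The plan is to reduce to a normalized instance, dispatch the ``stopping'' cases cheaply using a monotonicity principle, and then grind the single genuinely hard case. Since $f'(T)=f(T)-\min(f(\emptyset),f(M))$ differs from $f$ by a constant $c=\min(f(\emptyset),f(M))\ge 0$ and the algorithm returns the $f$-maximizer over a fixed set of candidates, it suffices to prove the bound for $f'$: writing the output value as $f'(R)+c$ and using $f(OPT)\ge c$, adding back $c$ only improves the ratio. So I would assume from the start that $\min(f(\emptyset),f(M))=0$ (both endpoints are nonnegative, one of them zero), write $o=f(OPT)$, and carry the $(1+\epsilon)$ slack of each $\LMS$ call symbolically, reinstating it only at the very end.

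Next I would assemble three reusable facts. First, for the top-level local maximum $S=\LMS(f',M)$ the Feige–Mirrokni–Vondrak inequality gives $2f(S)+f(S^c)\ge o+f(M)+f(\emptyset)$, and Theorem~\ref{theoremm} gives $f(S)\ge o/3$. Second, the two structural bounds, obtained exactly as in inequality~(\ref{eqn-intuition}) and its mirror image (submodularity plus local optimality of $S$): the subset subproblem on $S^c$ has optimum $o_1:=\max_{T\subseteq S^c}f(T)\ge o-f(S)$, and the superset subproblem has $o_2:=\max_{T\subseteq S}f(S^c\cup T)\ge o-f(S)$. Third, the combination lemma: since $T_1\subseteq S^c$ and $T_2\subseteq S$, submodularity applied to $T_1\cup T_2$ and $S^c$ yields $f(T_1\cup T_2)\ge f(T_1)+f(S^c\cup T_2)-f(S^c)$.

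The key observation that makes the stopping cases trivial is that an endpoint local maximum is a global optimum: if a ground set $N$ is a local maximum of $g$ then, because removal-marginals are monotone in the set size, $g(T)\le g(N)$ for every $T\subseteq N$, and symmetrically if $\emptyset$ is a local maximum then it is optimal. Hence whenever $\LMS$ returns the full or empty set in Step~\ref{step-stopping}, that set is already optimal for its subproblem; at the top this gives ratio $1$. When a depth-$1$ call stops, its optimum equals an endpoint value, and combined with the structural bound $o_i\ge o-f(S)$ this forces $f(S^c)\ge o-f(S)$ (or $f(\emptyset)\ge o-f(S)$, or $f(M)\ge o-f(S)$). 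Since $f(T_1)\ge f(S^c)$ and $f(S^c\cup T_2)\ge f(S^c)$ are guaranteed by the endpoint values of the subproblems, the returned candidates then already certify $\max\{f(S),f(S^c),f(\emptyset),f(M)\}\ge o/2$, comfortably beating $\tfrac25$.

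This leaves the fully-recursive case, which I expect to be the main obstacle. Here I would open each depth-$1$ call: the inner FMV inequality $2f(P)+f(S^c\setminus P)\ge o_1+f(S^c)+f(\emptyset)$ at the inner local maximum $P$, together with the fact that the depth-$0$ recursion returns at least the value of its own ground set (so $f(T_1)\ge f(S^c\setminus P)$), gives the two-level estimate $f(T_1)\ge\tfrac13\bigl(o_1+f(S^c)+f(\emptyset)\bigr)$ via $\max\{u,v\}\ge\tfrac13(2u+v)$, and symmetrically $f(S^c\cup T_2)\ge\tfrac13\bigl(o_2+f(S^c)+f(M)\bigr)$. Plugging these into the combination lemma and the structural bounds produces a family of linear inequalities in $f(S),f(S^c),o_1,o_2,f(M),f(\emptyset)$. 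The difficulty is that this first-order system is not strong enough: its optimum is only $\tfrac13 o$, attained at the spurious configuration $f(S)=f(S^c)=\tfrac{o}{3}$, $o_1=o_2=\tfrac{2o}{3}$. Breaking this false worst case is the crux, and I expect it to require feeding in the extra submodular relations that the configuration conceals—most importantly the monotone-marginal consequences of the inner local maxima (an inner endpoint optimum of $\tfrac{2o}{3}$ cannot coexist with an inner local maximum stuck at $\tfrac{o}{3}$), together with $f(S^c\cap OPT)+f(S^c\cup OPT)\le f(S^c)+o$—and then a case split on the relative sizes of $f(S)$ and $f(S^c)$ that rules out the spurious point and delivers $\tfrac25$. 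Finally, reinstating the $(1+\epsilon)$ factors: only $O(1)$ of the inequalities above are lossy, each by a factor $(1+\epsilon)$, so the ratio degrades from $\tfrac25$ to $\tfrac25-O(\epsilon)$, and rescaling $\epsilon$ finishes the proof.
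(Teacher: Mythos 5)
Your reductions (normalizing so that $\min(f(\emptyset),f(M))=0$, the combination inequality $f(T_1\cup T_2)\ge f(T_1)+f(S^c\cup T_2)-f(S^c)$, the structural bounds on the subproblem optima, and the observation that an endpoint local maximum is globally optimal for its subproblem) all match the paper's machinery, and your dispatch of the stopping cases is sound. But the proof is incomplete exactly where you say it is: in the fully-recursive case your linear system tops out at $\tfrac13 f(OPT)$, and the paragraph that is supposed to break the spurious worst case is a conjecture ("I expect it to require\dots"), not an argument. A proof attempt that identifies its own crux and leaves it open is a gap, and the rescue you gesture at (extra submodular relations plus a case split on $f(S)$ versus $f(S^c)$) is not what closes it.

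The missing idea is a \emph{stronger} guarantee for the depth-$1$ subcalls than the one you extract. You bound each depth-$1$ output by $\max\{f(P),f(P^c)\}\ge\tfrac13\bigl(o_i+f(S^c)+\cdot\bigr)$, i.e.\ you only use the inner local maximum and its complement, which puts a coefficient of $\tfrac13$ on the subproblem's endpoint value $f(S^c)$. The paper instead runs the composition-and-balancing argument \emph{inside} each depth-$1$ call (this is Claim \ref{ROUND1}): balancing the inner local maximum against the inner combined set yields $\alpha_1(x_{OPT},0,x_M)\ge\tfrac{1-\epsilon}{3}x_{OPT}+\tfrac{1-\epsilon}{2}x_M$, with coefficient $\tfrac12$ rather than $\tfrac13$ on the endpoint value. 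That $\tfrac12$ is precisely what makes the two copies of $f(S^c)$ coming from $f(T_1)$ and $f(S^c\cup T_2)$ cancel the $-f(S^c)$ in the combination inequality, leaving $f(T_1\cup T_2)\gtrsim\tfrac23\bigl(f(OPT)-f(S)\bigr)$; balancing against $f(S)$ then gives $\tfrac25$ with no case analysis on $f(S)$ versus $f(S^c)$ at all. So your spurious configuration $f(S)=f(S^c)=\tfrac{o}{3}$, $o_1=o_2=\tfrac{2o}{3}$ is killed not by additional submodular relations but by sharpening the depth-$1$ guarantee itself: at that configuration Claim \ref{ROUND1} forces $f(T_1)\ge\tfrac13\cdot\tfrac{2o}{3}+\tfrac12\cdot\tfrac{o}{3}$ and likewise for $f(S^c\cup T_2)$, whence $f(T_1\cup T_2)\ge\tfrac{4o}{9}>\tfrac{o}{3}$.
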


\subsection{Proof Overview}

We introduce a function $\alpha_{i}$ which represents a lower bound on the value of the solution that the algorithm outputs after $i$ rounds. Let $\mathcal F$ be the set of all submodular functions.
\begin{definition}
$\alpha_{i}(x_{OPT},x_{0},x_{M}) = \inf_{f\in \mathcal F} {\left\{\ALG_{\epsilon}\left(f,M,i\right) | f\left(OPT\right)\geq x_{OPT}, f\left(\emptyset\right)\geq x_{0}, f\left(M\right) \geq x_{M} \right\}}$.
\end{definition}
Our result is based on the following lower bounds for $\alpha_{i}$. The proofs are in Subsection \ref{subsec-proofs}.
\begin{claim} \label{ROUND0}
For every non negative $x_{OPT},x_0,x_M$:
$$\alpha_{0}(x_{OPT},x_{0},x_{M}) \geq \max(\frac{1}{3}\left(x_{OPT}+x_{0}+x_{M}\right),x_{M},x_{0})$$
\end{claim}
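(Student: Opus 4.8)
The plan is to unwind the depth-$0$ base case directly. When $nrounds=0$ the algorithm returns $S=\LMS\left(f',M\right)$, where $f'(T)=f(T)-c$ with $c=\min\left(f(\emptyset),f(M)\right)$; since the hypotheses give $f(\emptyset),f(M)\ge 0$ we have $c\ge 0$, and $f'$ is submodular with the same maximizer $OPT$ as $f$. So the whole task reduces to lower-bounding the \emph{single} quantity $f(S)$ by each of $\tfrac13\left(x_{OPT}+x_0+x_M\right)$, $x_M$, and $x_0$ separately; the maximum then follows for free, and taking the infimum over admissible $f$ yields the claim.

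For the $\tfrac13$ term, the bare bound $f'(S)\ge\tfrac13 f'(OPT)$ from Theorem~\ref{theoremm} is insufficient: adding back $c$ gives only $f(S)\ge\tfrac13 f(OPT)+\tfrac23 c$, and $c=\min\left(f(\emptyset),f(M)\right)$ can be far below $\tfrac13\left(f(\emptyset)+f(M)\right)$. I would therefore use the tight form of the base-case guarantee, $3f'(S)\ge f'(OPT)+f'(\emptyset)+f'(M)$. This is exactly what the deterministic double-greedy of \cite{Buchbinder:2012:TLT:2417500.2417835} that seeds $\LMS$ delivers (its $\tfrac13$-bound in sharp form), and since the local search of \cite{feige2011maximizing} started from that seed only increases the objective, the same inequality is inherited by the set $S$ that $\LMS$ returns. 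Adding $3c$ to both sides, the three copies of the shift cancel exactly:
\[
3f(S)=3f'(S)+3c\ge\big(f'(OPT)+c\big)+\big(f'(\emptyset)+c\big)+\big(f'(M)+c\big)=f(OPT)+f(\emptyset)+f(M)\ge x_{OPT}+x_0+x_M,
\]
so $f(S)\ge\tfrac13\left(x_{OPT}+x_0+x_M\right)$. This clean cancellation is precisely the purpose of the shift by $c$.

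For the $x_M$ and $x_0$ terms I would use that $S$ is a (approximate) local maximum of $f'$. Taking $T=S^c$ in the union condition gives $f'(S)\ge f'(S\cup S^c)=f'(M)$, and taking $T=\emptyset$ in the intersection condition gives $f'(S)\ge f'(\emptyset)$; unshifting (a single $c$ cancels in each case) yields $f(S)\ge f(M)\ge x_M$ and $f(S)\ge f(\emptyset)\ge x_0$. Combining the three inequalities gives $f(S)\ge\max\big(\tfrac13\left(x_{OPT}+x_0+x_M\right),\,x_M,\,x_0\big)$.

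The main obstacle here is conceptual rather than computational: recognizing that the literal $\tfrac13 f(OPT)$ statement is too weak and that one must carry the $f(\emptyset)+f(M)$ terms through so that the additive shift cancels. The only residual bookkeeping is the $(1+\epsilon)$ slack in the approximate-local-maximum conditions, which weakens the last two bounds to $\tfrac{x_M}{1+\epsilon}$ and $\tfrac{x_0}{1+\epsilon}$; this lower-order loss is absorbed into the global $\epsilon$ of Theorem~\ref{thm-main}, whereas the value inequality $3f(S)\ge f(OPT)+f(\emptyset)+f(M)$ is exact, since double-greedy attains it and the subsequent local search never decreases the value.
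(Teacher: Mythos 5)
Your treatment of the $\tfrac13\left(x_{OPT}+x_0+x_M\right)$ term is correct and is the same as the paper's: the paper's proof also rests on the sharp inequality $3f\left(S\right)\geq f\left(OPT\right)+f\left(M\right)+f\left(\emptyset\right)$ (which it attributes to Theorem \ref{theoremm} without comment), and you supply exactly the justification the paper leaves implicit --- that the deterministic double greedy used to seed $\LMS$ satisfies the three-term form $3f'\left(T\right)\geq f'\left(OPT\right)+f'\left(\emptyset\right)+f'\left(M\right)$, that the subsequent local search never decreases the value, and that the additive shift by $\min\left(f\left(\emptyset\right),f\left(M\right)\right)$ cancels when three copies of it are added to each side. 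Where you diverge is on the $x_M$ and $x_0$ terms: the paper does not obtain these from local maximality at all, but from the assertion that the algorithm returns the best of $S$, $M$ and $\emptyset$ (the same convention underlying Claim \ref{trivialM}), which gives $\alpha_0\geq x_M$ and $\alpha_0\geq x_0$ \emph{exactly}. Your route through the $\left(1+\epsilon\right)$-local-maximum conditions is sound but, as you yourself note, yields only $x_M/\left(1+\epsilon\right)$ and $x_0/\left(1+\epsilon\right)$, which is a strictly weaker statement than Claim \ref{ROUND0} as written; it would still suffice for Theorem \ref{thm-main} after absorbing an extra $O\left(\epsilon\, f\left(OPT\right)\right)$ loss into Claim \ref{trivialM} and the second part of the composition lemma, but to get the claim verbatim you should instead appeal to the explicit comparison against $M$ and $\emptyset$. (To be fair, neither reading matches the pseudocode literally: at $nrounds=0$ Step \ref{step-stopping} returns $S$ without the final $\argmax$, so the comparison with $M$ and $\emptyset$ must be understood as part of the intended base case --- the paper's own proof and Claim \ref{trivialM} both assume it.)
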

\begin{claim}[composition lemma] \label{GENERAL} 
Let $f$ be a submodular function, $i>0$, and $S$, $T_1$ and $T_2$ are the sets defined in the algorithm when we run $\ALG_\epsilon(f,M,i)$.
Then,
\begin{eqnarray*}
f\left(T_{1}\cup T_{2}\right) & \geq & \alpha_{i-1}\left(f\left(OPT\right)+f\left(\emptyset\right)-\left(1+\epsilon\right)f\left(S\right),f\left(\emptyset\right),f\left(S^{c}\right)\right) +\\
&& \alpha_{i-1}\left(f\left(OPT\right)+f\left(M\right)-\left(1+\epsilon\right)f\left(S\right),f\left(S^{c}\right),f\left(M\right)\right) -f\left(S^{c}\right)
\end{eqnarray*}
And in particular:
$$ f\left(T_{1}\cup T_{2}\right) \geq \alpha_{i-1}\left(f\left(OPT\right)+f\left(M\right)-\left(1+\epsilon\right)f\left(S\right),f\left(S^{c}\right),f\left(M\right)\right)$$

\end{claim}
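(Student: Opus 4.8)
The backbone of the argument is a single use of submodularity that fuses the two recursively computed sets. Since $T_1\subseteq M_1=S^c$ and $T_2\subseteq M_2=S$, applying submodularity to $A=T_1\cup T_2$ and $B=S^c$ (whose union is $S^c\cup T_2$ and whose intersection is $T_1$) gives
$$f\left(T_1\cup T_2\right)\ \geq\ f\left(T_1\right)+f\left(S^c\cup T_2\right)-f\left(S^c\right).$$
So it suffices to bound $f(T_1)$, a high-value subset of $S^c$, and $f(S^c\cup T_2)$, a high-value superset of $S^c$, by the first and second $\alpha_{i-1}$ terms in the statement, and then add.

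For $f(T_1)$: by construction $T_1=\ALG_\epsilon(f_1,S^c,i-1)$, so the definition of $\alpha_{i-1}$ directly gives $f_1(T_1)\geq\alpha_{i-1}(f_1(OPT_{M_1}),f_1(\emptyset),f_1(M_1))$, and I would lower-bound the three parameters of $f_1=f'|_{S^c}$. The values $f_1(\emptyset)=f'(\emptyset)$ and $f_1(S^c)=f'(S^c)$ are exact, while for $OPT_{M_1}$ I would use the feasible candidate $S^c\cap OPT$: submodularity on the pair $OPT\cap S^c,\,OPT\cap S$ yields $f(S^c\cap OPT)\geq f(OPT)+f(\emptyset)-f(S\cap OPT)$, and the intersection clause of the $(1+\epsilon)$-local-maximum property of $S$ bounds $f(S\cap OPT)$ from above by (essentially) $(1+\epsilon)f(S)$. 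The second term is completely symmetric: using $S^c\cup OPT$ and submodularity on $OPT\cup S^c,\,OPT\cup S$ gives $f(S^c\cup OPT)\geq f(OPT)+f(M)-f(S\cup OPT)$, and the union clause of the local-maximum property bounds $f(S\cup OPT)$.

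To pass from the $f_1$- and $f_2$-parameters to the $f$-based arguments appearing in the statement, I would invoke two elementary properties of $\alpha_{i-1}$: it is monotone nondecreasing in each coordinate (raising a threshold shrinks the feasible family and raises the infimum), and it is translation-equivariant, $\alpha_{i-1}(x_{OPT}+t,x_0+t,x_M+t)=\alpha_{i-1}(x_{OPT},x_0,x_M)+t$, because adding a constant $t$ to $f$ leaves the Step~1 normalization, hence the sets $S,T_1,T_2$ and the final argmax, unchanged while shifting every value by $t$. Combined with the identities $f(T_1)=f_1(T_1)+c$ and $f(S^c\cup T_2)=f_2(T_2)+c$, where $c=\min(f(\emptyset),f(M))$, these turn each $\alpha_{i-1}$ at the $f_1/f_2$-parameters into the claimed $f$-based expression, and plugging back into the displayed inequality yields the main bound. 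The ``in particular'' statement then follows by discarding the first $\alpha_{i-1}$ term against $-f(S^c)$: the algorithm always has $M$ among its candidate outputs (and if it halts early it is easy to check the output value is still at least $f(M)$), so $\alpha_{i-1}(\cdot,\cdot,x_M)\geq x_M$, and the third argument of the first term is exactly $f(S^c)$.

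The step I expect to be the real obstacle is the bookkeeping of the additive constant $c$, not any of the submodular inequalities. The local-maximum guarantee is multiplicative and holds for the normalized function $f'=f-c$, so translating the bounds on $f(S\cap OPT)$ and $f(S\cup OPT)$ back to $f$ produces the term $(1+\epsilon)f(S)$ together with a residual $\epsilon c$ in the $OPT$-coordinate that is \emph{not} a uniform shift of all three coordinates. Reconciling this with translation-equivariance forces me to use monotonicity in a direction that is valid only when $c\geq0$; fortunately this holds throughout, since under the standing hypothesis $f(\emptyset),f(M)\geq0$ (and after the Step~1 normalization) the relevant constant is nonnegative, so the leftover $\epsilon c$ falls on the favorable side. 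Making all these signs line up is where the care is required.
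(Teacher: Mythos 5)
Your proposal matches the paper's proof essentially step for step: the same submodularity inequality $f\left(T_1\cup T_2\right)\geq f\left(T_1\right)+f\left(S^c\cup T_2\right)-f\left(S^c\right)$, the same candidate sets $OPT\cap S^c$ and $OPT\cup S^c$ bounded via the local-maximum clauses combined with submodularity, monotonicity of $\alpha_{i-1}$ to pass to the claimed arguments, and the bound $\alpha_{i-1}\left(\cdot,\cdot,x_M\right)\geq x_M$ for the ``in particular'' part. Your explicit bookkeeping of the normalization constant $c$ via translation-equivariance and the sign condition $c\geq 0$ is in fact more careful than the paper, which silently identifies $f$ with $f'$ throughout this argument.
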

\begin{claim} \label{ROUND1}
For every non negative $x_{OPT},x_M,x_0$ we have:
\begin{eqnarray*}
\alpha_{1}(x_{OPT},0,x_{M}) &\geq &\max(\frac{1-\epsilon}{3}\cdot x_{OPT}+\frac{1-\epsilon}{2}\cdot x_{M},x_{0}) \\
\alpha_{1}(x_{OPT},x_0,0) &\geq &\max(\frac{1-\epsilon}{3}\cdot x_{OPT}+\frac{1-\epsilon}{2}\cdot x_{0},x_{0}) 
\end{eqnarray*}
\end{claim}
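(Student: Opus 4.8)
The plan is to bound $\alpha_1$ straight from its definition. Fix an arbitrary submodular $f$ with $f(OPT)\ge x_{OPT}$, $f(\emptyset)\ge x_0$ and $f(M)\ge x_M$, run $\ALG_\epsilon(f,M,1)$, and show that the value $V$ of the returned set meets the claimed bound; the infimum over all such $f$ then gives the bound on $\alpha_1$. Since the algorithm outputs $\argmax$ over $\{S,T_1\cup T_2,M,\emptyset\}$, we have $V=\max\{f(S),f(T_1\cup T_2),f(M),f(\emptyset)\}$, so in particular $V\ge f(\emptyset)\ge x_0$, which already yields the second term of the maximum in both lines. It remains to produce the first term. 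Write $O=f(OPT)$, $W=f(M)$, $E=f(\emptyset)$, $s=f(S)$, $\bar s=f(S^c)$; note $E,W\ge 0$ and $s\le O$, and observe that because the shift defining $f'$ is by the nonnegative constant $\min(E,W)$, the set $S$ is a $(1+\epsilon)$-local maximum of $f$ as well as of $f'$.

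My argument rests on three inequalities. The first is simply $V\ge s$. The second bounds $f(T_1\cup T_2)$ by invoking the composition lemma (Claim~\ref{GENERAL}) with $i=1$ and then Claim~\ref{ROUND0} on the two resulting $\alpha_0$ terms. The one genuinely delicate step, and the crux of the whole proof, is how the composition lemma's $-f(S^c)$ term is absorbed: if both $\alpha_0$ terms are bounded by the averaging bound $\tfrac13(x_{OPT}+x_0+x_M)$, the net coefficient of $\bar s$ comes out negative and one reaches only $\tfrac13 x_{OPT}+\tfrac13 x_M$. Instead I bound exactly one $\alpha_0$ term by its boundary slot $\bar s$ (the $\max(\cdots,x_M)$ or $\max(\cdots,x_0)$ part of Claim~\ref{ROUND0}) and the other by the averaging bound, which makes the net coefficient of $\bar s$ equal to $+\tfrac13$. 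For the first line of the claim ($x_0=0$) the correct choice is the boundary bound on the first term and the averaging bound on the second, yielding
\[
V\ge f(T_1\cup T_2)\ge \tfrac13\bigl(O+2W-(1+\epsilon)s+\bar s\bigr).
\]
The third inequality is the $(1+\epsilon)$-approximate version of the inequality of Feige et al.\ from Subsection~\ref{subsec-intuition}, namely $2(1+\epsilon)s+\bar s\ge O+W+E$, proved exactly as the exact version there but weakening each local-maximum step by a factor $1+\epsilon$; rearranged, it gives $\bar s\ge O+W-2(1+\epsilon)s$.

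Now everything combines mechanically. Because $\bar s$ enters the second inequality with a positive coefficient, the lower bound on $\bar s$ can be substituted in the favorable direction; discarding the nonnegative $E$ contribution gives $V\ge \tfrac23 O+W-(1+\epsilon)s$, and combining this with $V\ge s$ to eliminate $s$ produces $(2+\epsilon)V\ge \tfrac23 O+W$, i.e. $V\ge \tfrac{2}{2+\epsilon}\bigl(\tfrac13 O+\tfrac12 W\bigr)\ge(1-\epsilon)\bigl(\tfrac13 O+\tfrac12 W\bigr)\ge \tfrac{1-\epsilon}{3}x_{OPT}+\tfrac{1-\epsilon}{2}x_M$. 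The second line of the claim is symmetric: one bounds the \emph{second} $\alpha_0$ term by $\bar s$ and the first by the averaging bound, uses $\bar s\ge O+E-2(1+\epsilon)s$, and discards $W\ge 0$, arriving at $\tfrac{1-\epsilon}{3}x_{OPT}+\tfrac{1-\epsilon}{2}x_0$.

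Two loose ends remain, and I expect the bookkeeping around them to be the fussiest part rather than a genuine obstacle. First, the averaging bound of Claim~\ref{ROUND0} is stated for nonnegative arguments, whereas the first slot $O+W-(1+\epsilon)s$ above may be negative; this is harmless, since its companion slots are nonnegative, so the defining constraint $f(OPT)\ge(\text{nonpositive})$ is vacuous and $\alpha_0$ equals its value at first argument $0$, which still dominates $\tfrac13$ of the (smaller) sum. Second, when the algorithm halts at Step~\ref{step-stopping} with $S=M$ or $S=\emptyset$ and returns $S$, the same inequality forces $W$ (resp.\ $E$) to be at least $O/(1+2\epsilon)$, and a direct check shows $W\ge O/(1+2\epsilon)$ already implies $W\ge \tfrac{1-\epsilon}{3}O+\tfrac{1-\epsilon}{2}W$ (and symmetrically for $E$), so the returned value meets the bound.
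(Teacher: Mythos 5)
Your proof is correct and follows essentially the same route as the paper's: you bound $f(T_1\cup T_2)$ via the composition lemma with one $\alpha_0$ term collapsed to $f(S^c)$ (which is precisely how the paper derives the ``second part'' of Claim~\ref{GENERAL}), apply the averaging bound of Claim~\ref{ROUND0} to the other term, eliminate $f(S^c)$ using Claim~\ref{LMSTHIRD}, and balance the result against $f(S)$, arriving at the same $\frac{1}{2+\epsilon}\left(\frac{2}{3}f(OPT)+f(M)\right)$ bound. Your explicit treatment of the two edge cases (a possibly negative first slot of $\alpha_0$, and early termination with $S=M$ or $S=\emptyset$) is, if anything, more careful than the paper, which passes over them silently.
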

Before proving these two claims, let us see why they let us derive our main bound on the quality of the solution (note that lemma \ref{BIGLEMMA} can be directly derived from this claim):
\begin{claim} \label{ROUND2}
For every non negative $x_{OPT}$, $x_M$ we have:
$$\alpha_{2}(x_{OPT},0,0) \geq \left(\frac{2}{5}- \epsilon\right)x_{OPT}$$
\end{claim}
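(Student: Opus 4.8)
The plan is to take an arbitrary submodular $f$ with $f(OPT)\ge x_{OPT}$, $f(\emptyset)\ge 0$, $f(M)\ge 0$, run $\ALG_{\epsilon}(f,M,2)$, and argue directly about the value of its output. Since the algorithm returns $\argmax_{R\in\{S,T_1\cup T_2,M,\emptyset\}}f(R)$, its value is at least $\max\bigl(f(S),\,f(T_1\cup T_2),\,f(M),\,f(\emptyset)\bigr)$, so it suffices to lower bound this maximum. The objective is scale-invariant and every bound below is monotone in $f(OPT)$, so I would normalize $f(OPT)=x_{OPT}=1$ and write $s=f(S)$, $\sigma=f(S^c)$, $e=f(\emptyset)$, $m=f(M)$; the target is then $\max(\dots)\ge\tfrac25-\epsilon$. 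The boundary cases of Step \ref{step-stopping} are disposed of first: if $S=M$ then the $(1+\epsilon)$-local maximum property (condition 2 with $S\cap T=T$) gives $(1+\epsilon)f(M)\ge f(T)$ for all $T$, so $f(M)\ge f(OPT)/(1+\epsilon)$ is already near-optimal, and symmetrically for $S=\emptyset$. Hence we may assume the algorithm reaches Steps \ref{T1}--\ref{T2} and that $S,T_1,T_2$ are defined.

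The engine is the composition lemma (Claim \ref{GENERAL}) with $i=2$, which bounds $f(T_1\cup T_2)$ by a sum of two $\alpha_1$ terms minus $\sigma$. I would feed each term into Claim \ref{ROUND1}, using the monotonicity of $\alpha_{1}$ in each of its three arguments (tightening a lower-bound constraint shrinks the feasible family and so raises the infimum). For the first term $\alpha_1\bigl(1+e-(1+\epsilon)s,\,e,\,\sigma\bigr)$ I drop the middle argument to $0$ and apply the first line of Claim \ref{ROUND1} with $x_M=\sigma$; for the second term $\alpha_1\bigl(1+m-(1+\epsilon)s,\,\sigma,\,m\bigr)$ I drop the last argument to $0$ and apply the second line with $x_0=\sigma$. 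Each line contributes a $\tfrac{1-\epsilon}{2}\sigma$ bonus, so the two bonuses combine to $(1-\epsilon)\sigma$, which cancels the $-\sigma$ of the composition lemma up to a harmless $-\epsilon\sigma$. Discarding $e,m\ge 0$ from the remaining arguments yields $f(T_1\cup T_2)\ge \tfrac{2(1-\epsilon)}{3}\bigl(1-(1+\epsilon)s\bigr)-\epsilon\sigma$.

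To neutralize $-\epsilon\sigma$ I invoke that $OPT$ is a maximum-value set, so $\sigma=f(S^c)\le f(OPT)=1$, giving $f(T_1\cup T_2)\ge \tfrac{2(1-\epsilon)}{3}\bigl(1-(1+\epsilon)s\bigr)-\epsilon$. The statement then reduces to the single-variable estimate $\max\!\bigl(s,\ \tfrac{2(1-\epsilon)}{3}(1-(1+\epsilon)s)-\epsilon\bigr)\ge \tfrac25-\epsilon$. With $\epsilon=0$ the two branches cross where $s=\tfrac23(1-s)$, i.e. at $s=\tfrac25$ with common value $\tfrac25$: this is exactly the source of the constant, since for $s\ge\tfrac25$ the set $S$ alone suffices while for $s<\tfrac25$ the $T_1\cup T_2$ branch exceeds $\tfrac25$. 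I would then carry the $\epsilon$ terms through this crossing, noting that in the relevant regime $s\le\tfrac25$ the argument $1-(1+\epsilon)s$ stays positive so that Claim \ref{ROUND1} legitimately applies.

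The main obstacle is the bookkeeping of the $\sigma$ and $\epsilon$ terms in the second paragraph: one must route the two $\alpha_1$ invocations so that the $\sigma$-bonuses land against the $-\sigma$ loss, keeping $\sigma$ in the \emph{correct} slot (last slot for the first call, middle slot for the second) so that the first and second lines of Claim \ref{ROUND1} each fire. The choice of the full composition lemma rather than its ``in particular'' weakening is essential: with only the single $\alpha_1$ term the crossing drops to $s=\tfrac14$ and merely a $\tfrac14$-bound would follow. Everything beyond this is the routine one-dimensional optimization together with a final rescaling of $\epsilon$ to absorb the accumulated lower-order terms into the stated $\bigl(\tfrac25-\epsilon\bigr)$ guarantee, which by homogeneity of $\alpha_2$ in $x_{OPT}$ yields the claim for general $x_{OPT}$.
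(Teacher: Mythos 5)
Your proposal is correct and follows essentially the same route as the paper's own proof: apply the two-term form of the composition lemma, bound each $\alpha_1$ term via Claim \ref{ROUND1} so that the two $\tfrac{1-\epsilon}{2}f(S^c)$ bonuses cancel the $-f(S^c)$ loss, absorb the residual $-\epsilon f(S^c)$ using $f(S^c)\le f(OPT)$, and finish with the one-dimensional balance of $f(S)$ against $f(T_1\cup T_2)$ at the crossing point $\tfrac{2}{5}$. Your explicit treatment of the $S=M$ / $S=\emptyset$ stopping cases and of the sign of the first argument passed to $\alpha_1$ is slightly more careful than the paper, but the substance is identical.
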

\begin{proof} 
Let $f$ be a submodular function such that $f\left(OPT\right) \geq x_{OPT}$, $f\left(\emptyset\right) \geq 0$ and $f\left(M\right) \geq 0$.
We want to show that:
\begin{equation*} 
\ALG_{\epsilon}\left(f,M,2\right)\geq \left(\frac{2}{5}- \epsilon\right)f\left(OPT\right)
\end{equation*}
Let $T_1$, $T_2$ be the sets defined by the algorithm when we run $\ALG_{\epsilon}\left(f,M,2\right)$.
Using the composition lemma we know that:
\begin{eqnarray*}
f\left(T_1 \cup T_2 \right) & \geq & \alpha_{1}\left(f\left(OPT\right)-\left(1+\epsilon\right)f\left(S\right),0,f\left(S^{c}\right)\right) +                                    \alpha_{1}\left(f\left(OPT\right)-\left(1+\epsilon\right)f\left(S\right),f\left(S^{c}\right),0\right) -f\left(S^{c}\right)
\end{eqnarray*}
We use claim \ref{ROUND1} to bound $\alpha_1$ and get:
\begin{eqnarray*}
f\left(T_1 \cup T_2 \right) &\geq & 2\left(1-\epsilon\right)\left(\frac{1}{3}\left[f\left(OPT\right)-\left(1+\epsilon\right)f\left(S\right)\right]+\frac{1}{2}f\left(S^{c}\right)\right)-f\left(S^c\right) \\
& = & \frac{2}{3}\left(1-\epsilon\right)f\left(OPT\right)-\frac{2}{3}\left(1-\epsilon^2\right)f\left(S\right)-\epsilon f\left(S^c\right) \\
& \geq &\frac{2}{3}\left(1-\epsilon\right)f\left(OPT\right)-\frac{2}{3}f\left(S\right)-\epsilon f\left(OPT\right) \\
& = &\frac{2}{3}\left(f\left(OPT\right)-f\left(S\right)\right) - \frac{5}{3}\epsilon f\left(OPT\right)
\end{eqnarray*}
The algorithm compares $f\left(T_1 \cup T_2 \right)$ to $f\left(S\right)$ and returns the set with highest value.
Therefore,
$$\ALG_{\epsilon}\left(f,M,2\right)\geq \max\left(f\left(S\right),\frac{2}{3}f\left(OPT\right)-\frac{2}{3}f\left(S\right) - \frac{5}{3}\epsilon f\left(OPT\right)\right)$$
That is a maximum of two functions, both linear in $f\left(S\right)$, one is ascending and one is descending. The minimum of a 
maximum of two such functions is achieved at their intersection. I.e., when $f\left(S\right)=\frac{2}{3}\cdot f\left(OPT\right)-\frac{2}{3}\cdot f\left(S\right)-\frac{5}{3}\cdot \epsilon \cdot f\left(OPT\right)$.
Solving this equation gives us:
$$\ALG_{\epsilon}\left(f,M,2\right) \geq \left(\frac{2}{5}- \epsilon\right)f\left(OPT\right)$$
By the definition of $\alpha_2$ and since $f\left(OPT\right)\geq x_{OPT}$,
$$\alpha_{2}\left(x_{OPT},0,0\right)\geq \left(\frac{2}{5}- \epsilon\right)x_{OPT}$$

\end{proof}

\subsection{Proofs}\label{subsec-proofs}
To finish the proof of our main result all we are left with is proving the bounds for
$\alpha_0$ and $\alpha_1$ (claims \ref{ROUND0}, \ref{ROUND1}) and the composition lemma (claim \ref{GENERAL}).
We start with some helpful observations.
\subsubsection{Observations}

Let $S$ be a $(1+\epsilon)$-local maximum. Therefore:

\begin{equation} \label{LMCAP}
\left(1+\epsilon\right)f\left(S\right)\geq f\left(S\cap OPT\right)
\end{equation}
\begin{equation} \label{LMCUP}
\left(1+\epsilon\right)f\left(S\right)\geq f\left(S\cup OPT\right)
\end{equation}
Also, submodularity of $f$ implies that
\begin{equation}  \label{SUBMODCAP}
f\left(OPT\cap S^{c}\right)+f\left(OPT\cap S\right)\geq f\left(OPT\right)+f\left(\emptyset\right)
\end{equation}
\begin{equation} \label{SUBMODCUP}
f\left(OPT\cup S^{c}\right)+f\left(OPT\cup S\right)\geq f\left(OPT\right)+f\left(M\right)
\end{equation}
%We will need a lower bound for $f\left(S\right)$ and $f\left(S^c\right)$,
%in terms of $f\left(OPT\right)$, $f\left(\emptyset\right)$ and $f\left(M\right)$. 
The next claim appeared in \cite{feige2011maximizing} in a slightly different form:
\begin{claim} \label{LMSTHIRD}
Let $S$ be a $(1+\epsilon)$-local maximum, and $f$ a submodular function. Then $2\left(1+\epsilon\right)f\left(S\right)+f\left(S^{c}\right) \geq f\left(OPT\right)+f\left(M\right)+f\left(\emptyset\right)$.
\end{claim}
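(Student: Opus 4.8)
The plan is to bound $2(1+\epsilon)f(S)$ from below using the two local-maximum inequalities and then convert everything back to $f(OPT)$, $f(M)$, and $f(\emptyset)$ purely by submodularity. First I would add the local-maximum inequalities (\ref{LMCAP}) and (\ref{LMCUP}) to get $2(1+\epsilon)f(S) \geq f(S\cap OPT) + f(S\cup OPT)$, and then add $f(S^c)$ to both sides. After this step the local-maximum property has been fully used, and it suffices to prove the purely submodular inequality $f(S\cap OPT) + f(S\cup OPT) + f(S^c) \geq f(OPT) + f(M) + f(\emptyset)$, which is now a statement about $f$ alone.

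Next I would lower-bound the two terms $f(S\cap OPT)$ and $f(S\cup OPT)$ using (\ref{SUBMODCAP}) and (\ref{SUBMODCUP}), rewritten as $f(S\cap OPT) \geq f(OPT) + f(\emptyset) - f(OPT\cap S^c)$ and $f(S\cup OPT) \geq f(OPT) + f(M) - f(OPT\cup S^c)$. Substituting these into the left-hand side leaves the expression $2f(OPT) + f(M) + f(\emptyset) + f(S^c) - f(OPT\cap S^c) - f(OPT\cup S^c)$, so the only remaining task is to control the two cross terms $f(OPT\cap S^c)$ and $f(OPT\cup S^c)$.

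The final step — and the one place where the argument could stall if set up carelessly — is to dispose of those cross terms. Here I would invoke plain submodularity one more time, now on the pair $OPT$ and $S^c$, which gives $f(OPT) + f(S^c) \geq f(OPT\cup S^c) + f(OPT\cap S^c)$, equivalently $-f(OPT\cap S^c) - f(OPT\cup S^c) \geq -f(OPT) - f(S^c)$. Plugging this in, the two copies of $f(S^c)$ and one copy of $f(OPT)$ cancel exactly, leaving precisely $f(OPT) + f(M) + f(\emptyset)$, as required. The main thing to watch is the orientation of each inequality: (\ref{SUBMODCAP}) and (\ref{SUBMODCUP}) are applied as \emph{lower} bounds on the intersection and union with $S$, whereas the last use of submodularity is an \emph{upper} bound on the intersection and union with $S^c$. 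Keeping these two directions straight is the only delicate point, since everything else is mechanical substitution and a clean cancellation.
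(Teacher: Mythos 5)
Your proof is correct and follows essentially the same route as the paper: both first add the two local-maximum inequalities \eqref{LMCAP} and \eqref{LMCUP} to reduce the claim to the purely submodular statement $f(S\cap OPT)+f(S\cup OPT)+f(S^c)\geq f(OPT)+f(M)+f(\emptyset)$, and then verify that statement by chaining submodular inequalities. The only difference is bookkeeping: the paper finishes with two applications of submodularity (pairing $f(S^c)$ with $f(OPT\cup S)$ to produce $f(M)+f(OPT\cap S^c)$, then invoking \eqref{SUBMODCAP}), whereas you use three (\eqref{SUBMODCAP}, \eqref{SUBMODCUP}, and one more on the pair $OPT$, $S^c$ to cancel both cross terms), which is equally valid.
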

\begin{proof}
Since $f$ is submodular, 
\begin{equation*}
f\left(S^{c}\right)+f\left(OPT\cup S\right)\geq f\left(OPT\cap S^c\right)+f\left(M\right).
\end{equation*}
Inequalities  \eqref{LMCAP}, \eqref{LMCUP}, \eqref{SUBMODCAP} together with the above inequality gives us:

\begin{eqnarray*} \label{FEIGE}
2\left(1+\epsilon\right)f\left(S\right)+f\left(S^{c}\right) & \geq & f\left(S\cap OPT\right) + f\left(S\cup OPT\right) + f\left(S^{c}\right) \\
 & \geq & f\left(S\cup OPT\right)+f\left(OPT\cap S^{c}\right)  +f\left(M\right)\\
 & \geq & f\left(OPT\right)+f\left(\emptyset\right)+f\left(M\right)
\end{eqnarray*}

\end{proof}

\subsubsection{Proof of Claim \ref{ROUND0}}

Let $f$ be a submodular function, and $S$ be the local maximum found by the algorithm.
Using theorem \ref{theoremm}:
$$3f\left(S\right) \geq f\left(OPT\right)+f\left(M\right)+f\left(\emptyset\right)$$

Also, note that the algorithm takes the set with highest value among $S$, $S^c$, $M$ and $\emptyset$. Therefore, for every submodular $f$:
$$\ALG_{\epsilon}\left(f,M,0\right)\geq\max\left(f\left(\emptyset\right),f\left(M\right),\frac{1}{3}\left(f\left(OPT\right)+f\left(M\right)+f\left(\emptyset\right)\right)\right)$$
Hence, by the definition of $\alpha_0$, 
$$\alpha_0\geq\max\left(x_{o},x_{M},\frac{1}{3}\left(x_{OPT}+x_{0}+x_{M}\right)\right)$$

\subsubsection{Proof of Claim \ref{GENERAL} (composition lemma)}

We follow steps \ref{T1} and \ref {T2} of the algorithm. The algorithm takes the set $S^{c}$
and attempts to improve it by adding some elements and removing others. %For that purpose we defined $f_1$ and $f_2$ to be: $f_1\left(R\right)=f\left(R\right)$ and $f_2\left(R\right)=f\left(S^c\cup R\right)$. 
Since $f$ is submodular and $T_1\subseteq S^c$, then $f\left(T_{2}|T_{1}\right)\geq f\left(T_{2}|S^{c}\right)$.
Therefore:
\begin{equation} \label{submodt1t2}
f\left(T_{1}\cup T_{2}\right)=f\left(T_{1}\right)+f\left(T_{2}|T_{1}\right)\geq f\left(T_{1}\right)+f\left(T_{2}|S^{c}\right)=f_{1}\left(T_{1}\right)+f_{2}\left(T_{2}\right)-f\left(S^{c}\right)
\end{equation}
Next, we give bounds on $f_{1}\left(T_{1}\right)$ and $f_{2}\left(T_{2}\right)$:
\begin{claim} \label {SUBESETHELPER}
The above $T_1$ and $T_2$ satisfy:
\begin{eqnarray*}
f_{1}\left(T_{1}\right)&\geq & \alpha_{i-1}\left(f\left(OPT\right)+f\left(\emptyset\right)-f\left(S\right),f\left(\emptyset\right),f\left(S^{c}\right)\right)\\
f_{2}\left(T_{2}\right)&\geq & \alpha_{i-1}\left((f\left(OPT\right)+f\left(M\right)-f\left(S\right),f\left(S^{c}\right),f\left(M\right)\right)
\end{eqnarray*}
\end{claim}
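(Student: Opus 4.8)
The plan is to prove each of the two inequalities by exhibiting $f_1$ (resp.\ $f_2$) as a feasible point of the infimum that defines $\alpha_{i-1}$. Since $T_1=\ALG_{\epsilon}(f_1,M_1,i-1)$ with $M_1=S^c$, the number $f_1(T_1)$ is exactly the value attained by the algorithm on the submodular function $f_1$; hence, by the definition of $\alpha_{i-1}$ as an infimum over all submodular functions meeting three threshold constraints, it suffices to check that $f_1$ is submodular on $M_1$ and satisfies the three inequalities $\max_{T\subseteq M_1}f_1(T)\ge f(OPT)+f(\emptyset)-f(S)$, $f_1(\emptyset)\ge f(\emptyset)$, and $f_1(M_1)\ge f(S^c)$. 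If all three hold then $f_1$ lies in the feasible set, so $f_1(T_1)\ge\alpha_{i-1}(f(OPT)+f(\emptyset)-f(S),f(\emptyset),f(S^c))$, which is the first asserted bound. The identical reduction applies to $f_2$ on $M_2=S$, where $f_2(T)=f'(S^c\cup T)$ is submodular because contracting $f$ by the fixed set $S^c$ preserves submodularity.

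First I would dispose of the two boundary constraints, which hold with equality once Step~1 of the algorithm is used to normalize so that the additive shift $\min(f(\emptyset),f(M))$ equals $0$: indeed $f_1(\emptyset)=f(\emptyset)$ and $f_1(M_1)=f(S^c)$, while $f_2(\emptyset)=f(S^c\cup\emptyset)=f(S^c)$ and $f_2(M_2)=f(S^c\cup S)=f(M)$, so these match the second and third arguments of the two $\alpha_{i-1}$ expressions exactly. The substantive step is the OPT-value constraint, which I establish by choosing an explicit witness inside the ground set. For $f_1$ I take the witness $OPT\cap S^c\subseteq S^c$: submodularity in the form \eqref{SUBMODCAP} gives $f(OPT\cap S^c)\ge f(OPT)+f(\emptyset)-f(OPT\cap S)$, and the local maximality of $S$ bounds $f(OPT\cap S)=f(S\cap OPT)\le f(S)$, so $\max_{T\subseteq M_1}f_1(T)\ge f(OPT\cap S^c)\ge f(OPT)+f(\emptyset)-f(S)$. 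For $f_2$ I take the witness $OPT\cap S\subseteq S$; the set identity $S^c\cup(OPT\cap S)=S^c\cup OPT$ gives $f_2(OPT\cap S)=f(S^c\cup OPT)$, and then \eqref{SUBMODCUP} together with $f(S\cup OPT)\le f(S)$ yields $\max_{T\subseteq M_2}f_2(T)\ge f(S^c\cup OPT)\ge f(OPT)+f(M)-f(S)$.

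The step I expect to be the main obstacle is the last one, namely controlling $f(S\cap OPT)$ and $f(S\cup OPT)$ by $f(S)$. This is precisely the local-maximum property of $S$: for a genuine local maximum one has $f(S)\ge f(S\cap OPT)$ and $f(S)\ge f(S\cup OPT)$, and it is these that produce the stated first arguments $f(OPT)+f(\emptyset)-f(S)$ and $f(OPT)+f(M)-f(S)$. The delicate point is that the algorithm only returns a $(1+\epsilon)$-approximate local maximum, so the inequalities literally available are the approximate forms \eqref{LMCAP} and \eqref{LMCUP}; reconciling the clean statement proved here with that weaker guarantee is exactly where the factor $(1+\epsilon)$ that appears on $f(S)$ in the composition lemma (Claim~\ref{GENERAL}) comes from, the passage being justified by monotonicity of $\alpha_{i-1}$ in its first argument. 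With the two OPT-bounds and the two boundary equalities in hand, the reduction from the first paragraph immediately gives both asserted inequalities, and feeding them into \eqref{submodt1t2} then yields Claim~\ref{GENERAL}.
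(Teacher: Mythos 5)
Your proof is correct and essentially identical to the paper's: both bound $f_1(T_1)$ and $f_2(T_2)$ by viewing the subproblems through the recursive definition of $\alpha_{i-1}$, verify the boundary values $f_1(\emptyset)=f(\emptyset)$, $f_1(M_1)=f(S^c)$, $f_2(\emptyset)=f(S^c)$, $f_2(M_2)=f(M)$, lower-bound the subproblem optima via the witnesses $OPT\cap S^c$ and $OPT\cap S$ using \eqref{SUBMODCAP} with \eqref{LMCAP} and \eqref{SUBMODCUP} with \eqref{LMCUP}, and finish with the monotonicity of $\alpha_{i-1}$. You are also right about the delicate point: only the version with $(1+\epsilon)f(S)$ in the first argument follows from the approximate local maximum, and the omission of the $(1+\epsilon)$ factor in the claim's statement is a typo in the paper, whose own proof and whose use of the claim in Claim~\ref{GENERAL} carry the $(1+\epsilon)$ factor exactly as you describe.
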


\begin{proof}
First, note that by the definition of $\alpha_i$, it is monotone in the following sense: 
if $x_{OPT}\geq y_{OPT}$, $x_0 \geq y_0$ and $x_M \geq y_M$, then 
$\alpha_i\left(x_{OPT},x_0,x_M\right) \geq \alpha_i\left(y_{OPT},y_0,y_M\right)$.

Consider the subproblems in which we attempt to find a maximum of $f_1$ and $f_2$.
By the definition of $\alpha_{i-1}$, we have:
\begin{equation} \label{f1}
f_1\left(T_1\right)=\ALG_{\epsilon}\left(f_1,M_1,i-1\right) \geq \alpha_{i-1}\left(\max_{R\subseteq S^c}\left(f_1\left(R\right)\right),f_1\left(\emptyset\right),f_1\left(M_1\right)\right)
\end{equation}
\begin{equation} \label{f2}
f_2\left(T_2\right)=\ALG_{\epsilon}\left(f_2,M_2,i-1\right) \geq \alpha_{i-1}\left(\max_{R\subseteq S}\left(f_2\left(R\right)\right),f_2\left(\emptyset\right),f_2\left(M_2\right)\right)
\end{equation}

Combining \eqref{LMCAP} and \eqref{SUBMODCAP} gives us
$f_{1}\left(OPT\cap S^c\right)\geq f\left(OPT\right)+f\left(\emptyset\right)-\left(1+\epsilon\right)f\left(S\right)$. 
Therefore, $\max_{R\subseteq S^c}f_1\left(R\right)\geq f_{1}\left(OPT\cap S^{c}\right)\geq f\left(OPT\right)+f\left(\emptyset\right)-\left(1+\epsilon\right)f\left(S\right)$.
Note also that $f_{1}\left(\emptyset\right)=f\left(\emptyset\right)$ and
$f_{1}\left(M_{1}\right)=f\left(S^{c}\right)$.
Applying inequality \eqref{f1}  and using the monotonicity of $\alpha_i$, we have: 
\begin{eqnarray*}
f_{1}\left(T_{1}\right)&\geq & \alpha_{i-1}\left(\max_{R\subseteq S^c}\left(f_1\left(R\right)\right),f_1\left(\emptyset\right),f_1\left(M_1\right)\right)\\
&\geq& \alpha_{i-1}\left(f\left(OPT\right)+f\left(\emptyset\right)-\left(1+\epsilon\right)f\left(S\right),f\left(\emptyset\right),f\left(S^{c}\right)\right)
\end{eqnarray*}

Note that $f_{2}\left(OPT\cap S\right)=f\left(OPT\cup S^{c}\right)$.
Combining inequalities \eqref{LMCUP} and \eqref{SUBMODCUP} together we get
$f_{2}\left(OPT\cap S\right)\geq f\left(OPT\right)+f\left(M\right)-\left(1+\epsilon\right)f\left(S\right)$. 
Therefore, $\max_{R\subseteq S}f_2\left(R\right)\geq f_2\left(OPT \cap S\right)\geq f\left(OPT\right)+f\left(M\right)-\left(1+\epsilon\right)f\left(S\right)$.
Note also that $f_{2}\left(\emptyset\right)=f\left(S^c\right)$ and
$f_{2}\left(M_{2}\right)=f\left(M\right)$.
Applying inequality \eqref{f2}  and using the monotonicity of $\alpha_i$, we have: 
\begin{eqnarray*}
f_{2}\left(T_{2}\right)&\geq & \alpha_{i-1}\left(\max_{R\subseteq S}f_2\left(R\right),f_2\left(\emptyset\right),f_1\left(M_2\right)\right)\\
& \geq & \alpha_{i-1}\left(f\left(OPT\right)+f\left(M\right)-\left(1+\epsilon\right)f\left(S\right),f\left(S^{c}\right),f\left(M\right)\right)
\end{eqnarray*}

\end{proof}

Back to the proof of claim \ref{GENERAL}, by \eqref{submodt1t2} we have $f\left(T_{1}\cup T_{2}\right)\geq f\left(T_1\right)+f\left(T_2\right)-f\left(S^c\right)$. 
Applying claim \ref{SUBESETHELPER} gives us:
\begin{eqnarray*}
f\left(T_{1}\cup T_{2}\right) & \geq & \alpha_{i-1}\left(f\left(OPT\right)+f\left(\emptyset\right)-\left(1+\epsilon\right)f\left(S\right),f\left(\emptyset\right),f\left(S^{c}\right)\right) +\\ 
&& \alpha_{i-1}\left(f\left(OPT\right)+f\left(M\right)-\left(1+\epsilon\right)f\left(S\right),f\left(S^{c}\right),f\left(M\right)\right) -f\left(S^{c}\right)
\end{eqnarray*}
To prove second part of the composition lemma, we use the following claim:
\begin{claim} \label{trivialM}
For every non-negative $x_{OPT},x_M,x_0$ and every $i$ we have:
$\alpha_i\left(x_{OPT},x_0,x_M\right)\geq x_M$.
\end{claim}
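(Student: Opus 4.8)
The plan is to show, for an arbitrary submodular $f$ satisfying the three lower-bound constraints, that the value $f(\ALG_\epsilon(f,M,i))$ of the set returned by the algorithm is always at least $f(M)$; since $f(M)\geq x_M$ for every admissible $f$, taking the infimum over all such $f$ then yields $\alpha_i(x_{OPT},x_0,x_M)\geq x_M$. Because $\alpha_i$ is defined through the top-level call, the only relevant value of the depth parameter is $nrounds=i$, so I would simply enumerate the points at which this top-level call can return.

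The dominant case — and the reason the bound is essentially immediate — is when the algorithm reaches Step 5. There it outputs $\argmax_{R\in\{S,(T_1\cup T_2),M,\emptyset\}}f(R)$, and since $M$ itself is one of the four candidates, the returned value is at least $f(M)\geq x_M$. Thus $M$ always serves as a fallback whenever the full body of the algorithm executes.

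It remains to cover the early exits in Step 2. If $S=M$ the algorithm returns $M$, of value $f(M)\geq x_M$. If $i=0$ (the only way the condition $nrounds=0$ can trigger in the top-level call) the desired inequality is already contained in the statement of Claim \ref{ROUND0}, whose right-hand side includes the term $x_M$. The single case that is not immediate, and which I expect to be the crux, is $S=\emptyset$: here the algorithm returns $\emptyset$, whose value $f(\emptyset)$ might a priori be smaller than $f(M)$.

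To settle this last case I would exploit that $S=\emptyset$ is a $(1+\epsilon)$-local maximum of the shifted function $f'(T)=f(T)-\min(f(\emptyset),f(M))$. The first local-maximum inequality applied with $T=M$ reads $(1+\epsilon)f'(\emptyset)\geq f'(M)$. Writing $c=\min(f(\emptyset),f(M))$ this becomes $(1+\epsilon)(f(\emptyset)-c)\geq f(M)-c$. If $c=f(M)$ then $f(\emptyset)\geq f(M)$ holds outright, while if $c=f(\emptyset)$ the left-hand side vanishes and forces $f(M)\leq c=f(\emptyset)$; in either subcase $f(\emptyset)\geq f(M)\geq x_M$, so the returned value meets the bound. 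Having verified every exit point, I would take the infimum over all admissible $f$ to conclude $\alpha_i(x_{OPT},x_0,x_M)\geq x_M$.
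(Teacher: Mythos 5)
Your proposal is correct, and its core idea is the same as the paper's: the set $M$ is always available as a fallback output, so the returned value is at least $f(M)\geq x_M$. The difference is one of care. The paper disposes of the claim in a single sentence (``in Step 2 the algorithm can choose to output $M$''), which silently assumes that every exit point of the algorithm compares against $M$; read literally, the pseudocode's Step 2 returns $S$ itself, so the early exits with $S=M$, $S=\emptyset$, or $nrounds=0$ are not obviously covered. You enumerate these exits explicitly, and in particular you supply the one genuinely non-trivial piece: when $S=\emptyset$ is returned, the fact that $\emptyset$ is a $(1+\epsilon)$-local maximum of the shifted function $f'$ forces $f(\emptyset)\geq f(M)$ (via $(1+\epsilon)f'(\emptyset)\geq f'(M)$ and the case split on $\min(f(\emptyset),f(M))$), so the output still meets the bound. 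That argument is sound and is exactly what is needed to make the claim hold for the algorithm as literally written; the paper's one-line proof implicitly relies on the base case returning the best of $\{S,S^c,M,\emptyset\}$ (as asserted in the proof of Claim \ref{ROUND0}), under which reading your extra case analysis is unnecessary but harmless. Either way, your proof is complete and slightly more robust than the one in the paper.
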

\begin{proof}
In Step \ref{step-stopping} the algorithm can choose to output $M$. Therefore, $\ALG_{\epsilon}\left(f,M,i\right)\geq f\left(M\right)$.
\end{proof}
Using claim \ref{trivialM} we know that 
$$\alpha_{i-1}\left(f\left(OPT\right)+f\left(\emptyset\right)-\left(1+\epsilon\right)f\left(S\right),0,f\left(S^{c}\right)\right) \geq f\left(S^c\right)$$
Combining it with the first part of the composition lemma gives us:
\begin{eqnarray*}
f\left(T_{1}\cup T_{2}\right) & \geq & \alpha_{i-1}\left(f\left(OPT\right)+f\left(\emptyset\right)-\left(1+\epsilon\right)f\left(S\right),f\left(\emptyset\right),f\left(S^{c}\right)\right) +\\ 
&& \alpha_{i-1}\left(f\left(OPT\right)+f\left(M\right)-\left(1+\epsilon\right)f\left(S\right),f\left(S^{c}\right),f\left(M\right)\right) -f\left(S^{c}\right)\\ 
&\geq & \alpha_{i-1}\left(f\left(OPT\right)+f\left(M\right)-\left(1+\epsilon\right)f\left(S\right),f\left(S^{c}\right),f\left(M\right)\right)
\end{eqnarray*}

\subsubsection {Proof of Claim \ref{ROUND1}}

Let $f$ be a submodular function such that $f\left(OPT\right)\geq x_{OPT}$, $f\left(\emptyset\right)\geq 0$ and $f\left(M\right)\geq x_M$.
We will show that:
\begin{equation*}  
\ALG_{\epsilon}\left(f,M,1\right)\geq \max\left(x_{M},\frac{1}{3}\left(f\left(OPT\right)+f\left(M\right)\right)+\frac{1}{6}f\left(M\right)\right)
\end{equation*}
Let $T_1$, $T_2$ be the sets defined by the algorithm when we run $\ALG_{\epsilon}\left(f,M,1\right)$. Using the second part of the composition lemma we know that:
\begin{equation} \label{LHS1}
f\left(T_{1}\cup T_{2}\right) \geq \alpha_{0}\left(f\left(OPT\right)+f\left(M\right)-\left(1+\epsilon\right)f\left(S\right),f\left(S^{c}\right),f\left(M\right)\right)
\end{equation}
Claim \ref{ROUND0} shows that $\alpha_{0}\left(x_{OPT},x_{0},x_{M}\right)\geq \frac{1}{3}\left(x_{OPT}+x_{0}+x_{M}\right)$, 
and claim \ref{LMSTHIRD} gives us: $f\left(S^c\right)\geq f\left(OPT\right)+f\left(M\right)-2\left(1+\epsilon\right)f\left(S\right)$.
Therefore,
\begin{align} \label{RHS12}
\alpha_{0}&\left(f\left(OPT\right)+f\left(M\right)-\left(1+\epsilon\right)f\left(S\right),f\left(S^{c}\right),f\left(M\right)\right) \nonumber \\ 
& \geq  \frac{\left(f\left(OPT\right) + 2f\left(M\right)-\left(1+\epsilon\right)f\left(S\right)+f\left(S^{c}\right)\right)}{3}\nonumber \\   
&\geq \frac{2f\left(OPT\right)+3f\left(M\right)-\left(1+\epsilon\right)3f\left(S\right)}{3} 
\end{align} 
From inequalities \eqref{LHS1} and \eqref{RHS12} it follows that:
$$f\left(T_1 \cup T_2\right) \geq \frac{2f\left(OPT\right)+3f\left(M\right)-\left(1+\epsilon\right)3f\left(S\right)}{3}$$
The value of the output of the algorithm is at least the maximum of $f\left(T_1\cup T_2\right)$, $f\left(S\right)$, $\emptyset$ and $M$. Hence,
$$\ALG_{\epsilon}\left(f,M,1\right)\geq \max\left\{f\left(S\right),\frac{2f\left(OPT\right)+3f\left(M\right)-\left(1+\epsilon\right)3f\left(S\right)}{3}\right\}$$
This is a maximum of two functions, both linear in $f\left(S\right)$, one is ascending and one is descending. The minimum of a 
maximum of two such functions is obtained when they intersect, i.e., when $f\left(S\right)=f\left(T_1\cup T_2\right)$.
Therefore
$\max\left(f\left(S\right),f\left(T_1 \cup T_2\right)\right) \geq \frac{1}{6+3\epsilon}\left(2f\left(OPT\right)+3f\left(M\right)\right)$.
Since $\frac{1}{6+3\epsilon}\geq \frac{1}{6}\left(1-\epsilon\right)$ we have:
$$\max\left(f\left(S\right),f\left(T_1 \cup T_2\right)\right) \geq \left(1-\epsilon\right)\left(\frac{1}{3}\cdot f\left(OPT\right)+\frac{1}{2}\cdot f\left(M\right)\right)$$
Therefore,
$$\ALG_{\epsilon}\left(f,M,1\right)\geq \max\left(f\left(M\right),\frac{1-\epsilon}{3}\left(f\left(OPT\right)+f\left(M\right)\right)+\frac{1-\epsilon}{6}\cdot f\left(M\right)\right)$$
By the definition of $\alpha_1$, it follows that:
$$\alpha_{1}\left(x_{OPT},0,x_{M}\right)\geq\max\left(x_{M},\frac{1-\epsilon}{3}\left(x_{OPT}+x_{M}\right)+\frac{1-\epsilon}{6}\cdot x_{M}\right)$$
The proof that $\alpha_{1}(x_{OPT},x_0,0) \geq  \max(\frac{1-\epsilon}{3}\cdot x_{OPT}+\frac{1-\epsilon}{2}\cdot x_{0},x_{0})$ is similar, 
with the roles of $M$ and $\emptyset$ switched.

\bibliographystyle{plain}
\bibliography{deterministic-submod}

\end{document}